\newcommand{\@chapapp}{\relax}%
\DeclareOldFontCommand{\rm}{\normalfont\rmfamily}{\mathrm}
\DeclareOldFontCommand{\sf}{\normalfont\sffamily}{\mathsf}
\DeclareOldFontCommand{\tt}{\normalfont\ttfamily}{\mathtt}
\DeclareOldFontCommand{\bf}{\normalfont\bfseries}{\mathbf}
\DeclareOldFontCommand{\it}{\normalfont\itshape}{\mathit}
\DeclareOldFontCommand{\sl}{\normalfont\slshape}{\@nomath\sl}
\DeclareOldFontCommand{\sc}{\normalfont\scshape}{\@nomath\sc}
\DeclareRobustCommand*\cal{\@fontswitch\relax\mathcal}
\DeclareRobustCommand*\mit{\@fontswitch\relax\mathnormal}
\colorlet{keywordcolor}{blue!50!black}
\colorlet{commentcolor}{green!60!black}
\colorlet{typecolor}{violet}
\newcommand{\sourcefont}{\ttfamily\small}
\newcommand{\commentfont}{\slshape\rmfamily\color{commentcolor}}
\lstdefinelanguage{ABS}{
        keywords={physical,duration,diff,differential,do,assert,this,new,data,type,def,case,of,local,class,interface,
        extends,implements,if,then,else,await,get,Fut,return,skip,while,module,
        import,export,from,to,suspend,delta,adds,modifies,removes,original,productline,
        features,core,corefeatures,optionalfeatures,after,when,product,hasAttribute,
        hasMethod,hasField,hasInterface,uses,root,extension,group,allof,oneof,require,
        stateupdate,object,main,objectupdate,classupdate,fi,
        exclude,original,ifin,ifout,opt,null,
        newgroup,data,thiscomp,in,joins,leaves,subtypeOf,wait,acquire,except,as,component,Pre,Abs
        },
        keywordstyle=\color{keywordcolor}\bfseries\sffamily,
        morekeywords=[2]{Unit, Int, Bool, Rat, List, Set, Pair, Fut, Maybe, String, Triple, Either, Map, Real},
        keywordstyle=[2]\color{typecolor},
        sensitive=true,
        comment=[l]{//},
        morecomment=[s]{/*}{*/},
        morestring=[b]",
        mathescape=true,
}
\lstdefinelanguage[v9]{Java}[]{Java}{
        morekeywords={module,requires,provides,uses,with,to,exports}
}
\lstdefinelanguage[ContextJ]{Java}[]{Java}{
        morekeywords={layer,with,without,proceed,before,after}
}
\lstdefinelanguage[FOP]{Java}[]{Java}{
        morekeywords={refines,original,Super}
}
\lstdefinelanguage[JastAdd]{Java}[]{Java}{
        morekeywords={aspect,syn,inh,lazy}
}
\lstdefinestyle{code}{
        basicstyle=\sourcefont\upshape,
        keywordstyle=\color{keywordcolor}\bfseries\sffamily,
        commentstyle=\commentfont,
        columns=fullflexible,
        mathescape=true,
        escapechar={\#},
        keepspaces=true,
        showstringspaces=false,
        aboveskip=8pt, 
        numbers=left,
        stepnumber=1, 
        numberstyle=\ttfamily\scriptsize\color{gray},
        numbersep=4pt,
        xleftmargin=1.5em,
        xrightmargin=1.5em,
        framexleftmargin=1.2em,
        framexrightmargin=1em,
        framextopmargin=0.5ex,
        breaklines=true,
        breakindent=3pt,
}
\lstdefinestyle{abs}{
        style=code,
        language=ABS,
}
\lstdefinestyle{java}{
        style=code,
            language=Java
}
\lstdefinestyle{java9}{
        style=code,
            language=[v9]Java
}
\lstdefinestyle{aspectj}{
        style=code,
        language=[AspectJ]Java
}
\lstdefinestyle{jastadd}{
        style=code,
        language=[JastAdd]Java
}
\lstdefinestyle{contextj}{
        style=code,
        language=[ContextJ]Java
}
\lstdefinestyle{FOP}{
        style=code,
        language=[FOP]Java
}
\lstdefinestyle{scala}{
        style=code,
        language=Scala,
        morekeywords={self}
}
\newcommand{\code}[2][]{\lstinline[style=code,basicstyle=\ttfamily\upshape,#1]|#2|}
\newcommand{\abs}[2][]{\code[style=abs,#1]{#2}}
\newcommand\BeraMonottfamily{%
  \def\fvm@Scale{0.85}
  \fontfamily{fvm}\selectfont
}
\lstdefinelanguage{KeYmaeraX}{%
  keywords={if,then,else,R,B,HP,Functions,ProgramVariables,Problem,End,Definitions,ArchiveEntry,Tactic,SharedDefinitions},%
  sensitive=true,
  morecomment=[s]{/*}{*/},
  deletestring=[d]',
  showstringspaces=false,
  commentstyle=\color{green},
  mathescape,
  escapeinside={/*@}{@*/}}[keywords]
\lstdefinelanguage{Bellerophon}{%
  language={},
  keywords={'R,'L,'_},%
  otherkeywords={;,<,|},
  sensitive=true,
  morecomment=[l]{//},
  morecomment=[s]{/*}{*/},
  morestring=[b]",
  deletestring=[d]',
  morestring=[d]`,
  showstringspaces=false,
  commentstyle=\fontseries{lc}\color{green}}[keywords]
\newcommand{\keycode}[1]{
    { \lstset{language=KeYmaeraX}
    \begin{lstlisting}
     #1
    \end{lstlisting}
    }
}
\newcommand{\EK}[1]{{#1}}
\newcommand{\COMMENT}[1]{}
\newcommand{\HABS}{\ensuremath{\mathtt{HABS}}\xspace}
\DeclareMathOperator*{\argmin}{\mathbf{argmin}}
\let\temp\phi
\let\phi\varphi
\let\varphi\temp
\newcommand{\xRightarrow}[2][]{\ext@arrow 0359\Rightarrowfill@{#1}{#2}}
\newcommand{\cased}[1]{\ensuremath{
\left\{\begin{array}{ll}
#1
\end{array}\right.
}\xspace}
\newcommand{\synsep}{  \ | \ }
\newcommand{\xabs}[1]{\text{\abs{#1}}}
\newcommand{\sem}[1]{\ensuremath{ \llbracket #1 \rrbracket \xspace}}
\newcommand{\many}[1]{\overline{#1}}
\newcommand{\rulename}[1]{\textbf{\scriptsize(\textsf{#1})}}
\newcommand{\ddl}{\ensuremath{d\mathcal{L}}\xspace}
\newcommand{\dL}{\ddl}
\newcommand{\trace}{\ensuremath{\theta}\xspace}
\newcommand{\methodname}{\xabs{m}\xspace}
\newcommand{\classname}{\xabs{C}\xspace}
\newcommand{\statement}{\xabs{s}\xspace}
\newcommand{\fullpaper}[1]{#1}
\newcommand{\coreid}{{\tt ceid}}
\newtheorem{notation}{Notation}
\newcommand{\CF}{\text{\tt CF}}
\newcommand{\cfexps}{\text{\tt E}}
\newcommand{\cfminus}{-}
\newcommand{\cfisinfty}{\text{\tt is\_infty}}
\newcommand{\cfispositive}{\text{\tt is\_positive}}
\newcommand{\tacontext}{\text{\tt TC}}
\newcommand{\TA}{\text{\tt TA}}
\newcommand{\tapreds}{\mathcal{C}}
\newcommand{\tamap}{\mathcal{T}}
\newcommand{\TI}{\text{\tt TI}}
\newcommand{\allmethodnames}{\vect{\xabs{C.m}}}
\newcommand{\allstatements}{\vect{\xabs{s}}}
\newcommand{\allrhs}{\vect{\xabs{rhs}}}
\newcommand{\futuremethod}{\mathit{M}}
\newcommand{\getanncontrolled}{\mathit{treq}}
\newcommand{\getanncontrolling}{\mathit{tctrl}}
\title{Type-Based Verification of Delegated Control in Hybrid~Systems (Full Version)}
\author{Eduard Kamburjan\,\orcidlink{0000-0002-0996-2543} \and Michael Lienhardt\,\orcidlink{0009-0009-9635-5757}}
\institute{
University of Oslo, Norway, \email{eduard@ifi.uio.no} \\
ONERA, Palaiseau, France, \email{michael.lienhardt@onera.fr} 
}
\titlerunning{Type-Based Verification of Delegated Control in Hybrid~Systems}
\begin{document}
\maketitle
\begin{abstract}
We present a post-region-based verification system for distributed hybrid systems modeled with Hybrid Active Objects. The post-region of a \EK{class} method is the region \EK{of the state space} where a physical process must be proven safe to ensure some object invariant.
Prior systems computed the post-region locally to a single object and could only verify systems where each object ensures its own safety, or relied on specific, non-modular communication patterns. The system presented here uses a \emph{type-and-effect system} to structure the interactions between objects and computes post-regions globally, but verifies them locally. Furthermore, we are able to handle systems with \emph{delegated} control: the object and method that shape the post-region change over time. We exemplify our approach with a model of a cloud-based hybrid system.
\end{abstract}

\section{Introduction}
\label{sec:intro}
Cyber-physical systems are notoriously difficult to design, maintain and analyze, and major innovation drivers such as the Internet-of-Things or Digital Twins pose additional challenges for formal modeling and verification.
For one, such systems are inherently distributed. For another, the controlling software may, contrary to classical control, use \emph{delegation} for the controlled process: parts of the controlling software may run on a cloud infrastructure which may restart the controlling processes, as well as reallocate to a different instance. Thus, the obligation for part of the control can be delegated to another instance. Formal guarantees are of critical importance, yet distributed hybrid models and delegation remain an open theoretical challenge. 

In this work, we present a system for modular deductive verification of distributed hybrid systems, 
which is able to handle delegated control.
Our approach is based on \emph{hybrid programs}: programs that contain constructs to express continuous evolution of their state.
Programming languages-based approaches for modeling of hybrid systems have recently gained increased research attention~\cite{Kamburjan21a,DBLP:conf/ictac/0001NP20,hybreb} and aim to provide a theory for hybrid systems that combines simulation, verification and usability. One of their advantages over low-level formalisms, such as hybrid automata~\cite{Rajeev93} or process algebras~\cite{CuijpersR05}, is the rich theory of modularity and structure available for programming languages that allows one to capture and analyze the adaptive structure of modern distributed cyber-physical systems. We show that hybrid programs can indeed provide the necessary structure to handle loose coupling and delegation, by integrating verification with a \emph{type-and-effect} system~\cite{185530}, a lightweight analysis technique for programs to keep track of side-effects in computational units.

\paragraph{Hybrid Active Objects.}
We use the Hybrid Active Object (HAO) concurrency model, which is one hybrid programming paradigm for distributed systems and extends Active Objects~\cite{BoerSHHRDJSKFY17} and is implemented in the \emph{Hybrid Abstract Behavioral Specification} (\texttt{HABS}) language~\cite{arxiv}.
%

A Hybrid Active Object $o$ is an Active Object that additionally encapsulates a physical process.
Only the discrete processes of $o$ may interact with the physical process of $o$.
A discrete process reacts on changes in the physical process using its suspension guards.
While the discrete process is active, the fields of the physical process can be accessed as normal fields, but when time advances such fields change their value according to the physical dynamics.

Previous work~\cite{Kamburjan21a,arxiv} introduced two verification systems for Hybrid Active Objects that verify object invariants:
Kamburjan~\cite{Kamburjan21a} gives a generalization of post-condition reasoning for object-oriented languages~\cite{key} to hybrid systems.
However, the system has one major drawback: it computes post-regions based on single classes -- it cannot handle interactions between multiple objects beyond checking conditions on the passed parameters. Thus, it is not able to use global information about the overall system to aid verification. 
On the other hand, an alternative system~\cite{arxiv} can handle more complex interactions, but suffers from a lack of modularity:
To make use of global structure, it only uses a small, rigidly defined syntactic subset of \texttt{HABS}. In particular, the structure of the overall system may not change and delegation is not possible.

\paragraph{Type-Based Deductive Verification}
In this work we present a novel verification system for HABS that goes beyond previous systems for object invariants: we use a type-and-effect system to enable \emph{post-regions} to be computed based on interactions between multiple objects. 
By using \EK{the structure provided by the type system}, we are able to integrate deductive verification systems with the modeling and analysis of cloud systems.

Given an object invariant $I$, we verify for each method \texttt{m} that when it suspends, $I$ holds until the next process runs. 
The \emph{post-region} $\mathit{pr}$ of a method is \EK{the part of the state space} where the dynamics must satisfy $I$ for this property to hold. For example, if another method \texttt{mctrl} executes every $n$ time units, then the post-region of \texttt{m} can be restricted to the next $n$ time units. It must, however, be ensured that \texttt{mctrl} is indeed called as specified. We say that \texttt{mctrl} is controlled if the global structure indeed ensures that it is called every $n$ time units. The caller of a controlled method is a \emph{controlling} method.

Our system can verify \EK{\emph{delegated}} control, where the controlling method changes during the lifetime of a controlled object.  Consider again the method \texttt{mctrl} from above. 
It can be called every $n$ time units \emph{by another object} and the object calling \texttt{mctrl} may change over time. To use the post-region of \abs{m}, we must ensure that there is \emph{always} some controller for \abs{mctrl}. To do so, for each object and each method that is specified as being externally controlled, we keep track of the current controller using a type-and-effect system.

A type-and-effect system is a generalization of data type systems, which are defined for some specific side-effect.
It checks the correctness of evaluation for a certain set of objects with respect to this side-effect. In our case, the side-effect of interest is time advance. By keeping track of how much time each statement takes, we can verify whether \abs{mctrl} is called frequently enough. Additionally, we keep track of \emph{ownership}~\cite{DBLP:conf/aplas/ClarkeWOJ08} to ensure that every method that requires to be frequently called is indeed always owned by somebody who does. 
A main feature of our behavioral type system is its parametricity: while we do keep track of effects and ownership, we do not compute how long a certain communication pattern takes. Systems for that kind of property are available~\cite{GiachinoJLP15,DBLP:journals/jlp/LaneveLPR19} or are straightforward to extend; we integrate them through oracles that encapsulate their analysis. That drastically simplifies our system and allows us to focus on the presentation of the novel features of the type-and-effect system.

\paragraph{Contributions}
Our main contribution is a modular deductive verification system for Hybrid Active Objects with \emph{delegated} control that uses a type-and-effect system to govern interactions (1) between multiple Hybrid Active Objects and (2) between Hybrid Active Objects and cloud-models using Timed Active Objects.

\section{Hybrid Active Objects and Post-Regions}
\label{sec:habs}
In this section we present the preliminiaries for the rest of the article.
First, we present the \texttt{HABS} language that implements Hybrid Active Objects.
It is introduced, and fully described, by Kamburjan et al.~\cite{arxiv}.
Here, we only present the language parts that are relevant for post-region based verification and omit, e.g., inheritance, method visibility and variability.

A Hybrid Active Object (HAO) is a strongly encapsulated object, i.e., 
no other object, not even from the same class, may access the fields of an instance. Communication between HAOs is only possible through asynchronous method calls and synchronization: each method call generates a container called \emph{future} for the caller that uniquely identifies the (to be) started process at callee side. The future may be passed around and permits to synchronize (i.e., wait until the called process terminates) with it and read the return value of the associated process.
HAOs implement \emph{cooperative scheduling}:
Inside an object, only one process is active at a time. This process cannot be preempted by the scheduler --- 
it must explicitly release control by either terminating or suspending (via \abs{await}). 
These two properties make (Hybrid) Active Objects easy to analyze: 
Approaches for sequential program analyses can be applied
between two \abs{await} statements (and method start and end).

\begin{figure}[b!t]
\noindent\begin{abscode}
class Tank(Log log){                //class header with field declaration
  physical Real level = 5;          //physical field declaration
  Real drain = -1;                 //field declaration
  physical{ level' = drain; }       //dynamics
  { this!up(); this!down(); }$\label{line:init}$        //constructor
  Unit down(){                     //method header
    await diff level <= 3 & drain <= 0;
    log!triggered(); drain =  1; this!down();
  }
  Unit up(){                       //method header
    await diff level >= 10 & drain >= 0;
    log!triggered(); drain = -1; this!up();
  }
}
\end{abscode}
\caption{A water tank in \HABS with event-based control.}
\label{fig:bball}
\end{figure}

Hybrid Active Objects differ from standard Active Objects by a \abs{physical} block and \abs{physical} fields.
A \abs{physical} field is a field that has some dynamics, while the \abs{physical} block describes the very dynamics of all \abs{physical} fields as ODEs.
These dynamics are used to update the state whenever time advances. 
Inside a method, an imperative language is used, which has special statements to advance time or to wait until some condition on the state holds.
These conditions define an urgent transition: The method is reactivated as soon as possible once the condition holds (and no other process is active). 

\begin{example}\label{ex:eventroom}
Consider the water tank model in Fig.~\ref{fig:bball}.
The tank keeps a water level between \EK{3 liters and 10 liters}.
The pictured class, \abs{Tank} has two discrete fields (\abs{log} and \abs{drain}) and
a \abs{physical} field \abs{level}.
A physical field is described by its initial value and \EK{an} ODE in the \abs{physical} block, which models that the water level is linear with respect to the drain.
Line~\ref{line:init} gives the constructor in form of an initialization block where the two methods 
\abs{up} and \abs{down} are called.
Each method starts with a statement that has as its guard the condition when the process will be scheduled (for \abs{up}, at the moment the level reaches 10 while water rises). 
This is logged by calling the external object  \abs{log} on method \abs{triggered}.
This method call is asynchronous, i.e., the execution of the \abs{up} (or \abs{down}) continues without waiting for it to finish. No future is used in this example. 
Then, the drain is adjusted and the method calls itself recursively to react the next time.
\end{example}

Example~\ref{ex:eventroom} illustrates \emph{event-based} control, as the guard of the \mbox{\abs{await diff}} statement define an event boundary. Alternatively, one may use \emph{time-based} control, as the following example illustrates.

\begin{example}\label{ex:timedroom}
The controller in Fig.~\ref{fig:ctank} checks the water level of a tank once every time unit by using the \abs{await duration} statement to suspend the \abs{ctrl} process for the required amount of time. \EK{We use JML~style~\cite{JML-Ref-Manual} comments for specification.}
\begin{figure}[b!th]
\begin{abscode}
/*@ requires 4 <= inVal <= 9  @*/
/*@ invariant 3 <= level <= 10 && -1 <= drain <= 1 @*/
class TankTick(Real inVal){
  physical Real level = inVal;
  Real drain = -1;
  physical{ level' = drain; }
  { this!ctrl(); }
  Unit ctrl(){
    await duration(1);
    if(level <= 4) drain =  1;
    if(level >= 9) drain = -1;
    this!ctrl();
  }
}
\end{abscode}
\caption{A specified water tank in \HABS with time-based control.}
\label{fig:ctank}
\end{figure}
\end{example}

\subsection{Syntax}

The syntax of \HABS is given by the grammar in Fig.~\ref{fig:syntax}.
Standard expressions \texttt{e} are defined over fields \texttt{f}, variables \texttt{v} and
operators \texttt{!}, \texttt{|}, \texttt{\&}, \texttt{>=}, \texttt{<=}, \texttt{+}, \texttt{-},
\texttt{*}, \texttt{/}.  
Ordinary differential expressions (ODE) are equalities over expressions extended with a derivation operator \abs{e'}.
Types \texttt{T} are all class names, type-generic
futures \abs{Fut<T>}, \abs{Real}, \abs{Unit} and \abs{Bool}. 

\begin{figure*}[tbh]
\resizebox{\textwidth}{!}{
\begin{minipage}{1\textwidth}
\begin{align*}
\mathsf{Prgm} ::=~& \many{\mathsf{CD}}~\{\mathsf{s}\}\quad
\mathsf{CD} ::= \xabs{class C}\left[\xabs{(}\many{\xabs{T f}}\xabs{)}\right]\{\many{\mathsf{FD}}~[\mathsf{Phys}]~[\{\mathsf{s}\}]~\many{\mathsf{Mt}}\} 
&& \text{\small Classes}\\
\mathsf{Mt} ::=~& \xabs{T m(}\many{\xabs{T}~\xabs{v}}\xabs{)}~\{\mathsf{s}\xabs{;}\xabs{return e;}\}
&& \text{\small Methods}\\
\mathsf{FD} ::=~& \xabs{T f[ = e];} \synsep  \xabs{physical Real f = e;}
&& \text{\small Fields}\\
\mathsf{Phys} ::=~& \xabs{physical}~\{\many{\mathsf{ODE}}\} 
&&\text{\small Physical Block}\\
\mathsf{s} ::=~&
\xabs{while (e)}~\{\mathsf{s}\}\synsep
\xabs{if (e)}~\{\mathsf{s}\}~[\xabs{else}~\{\mathsf{s}\}]\synsep
\mathsf{s}\xabs{;}\mathsf{s} && \\
&\synsep\xabs{await}_\xabs{p}~\mathsf{g} \synsep
[[\xabs{T}]~\xabs{e}] = \mathsf{rhs} \synsep \xabs{duration(e)}&&\text{\small Statements}\\
\mathsf{g} ::=~&\xabs{e?} \synsep \xabs{duration(e)} \synsep \xabs{diff e}
&&\text{\small Guards}\\
\mathsf{rhs} ::=~&
\xabs{e} \synsep \xabs{new C(}\many{\xabs{e}}\xabs{)} \synsep 
\xabs{e.get} \synsep
\xabs{e!m(}\many{\xabs{e}}\xabs{)}
&&\text{\small RHS Expressions}
\end{align*}
\end{minipage}
}
\caption{\HABS grammar. Notation $[\cdot]$ denotes optional elements and $\many{~\cdot~}$ lists.}
\label{fig:syntax}
\end{figure*}

A program consists of \EK{a set of classes and a main block}. Each class may have a list of discrete fields that are passed as parameters on object creation and a list of internally declared fields with.
An internally declared field\footnote{\emph{All} fields, independent of where they are declared, are accessible only from their object.} may be declared as \abs{physical}. In this case it must be of \abs{Real} type and  must be initialized. 
Furthermore, a class has a physical block, which defines the dynamics of \abs{physical} fields and must be present if at least one field is \abs{physical}. An optional
initialization block is executed directly after object creation and serves as the constructor. Lastly, a class has a set of methods. 

Methods, initializing and main \EK{blocks} consist of statements. Besides the asynchronous method calls (\texttt{e!m()}) described above, only the following constructs are non-standard:
\begin{itemize}
\item The \abs{duration(e)} statement advances time by \texttt{e} time units. \EK{No other process
may execute in that object during this time lapse.}
\item The \abs{e.get} right-hand side expression reads from a future. A future is a container that is generated by an asynchronous call. 
Afterwards a future may be passed around. With the \abs{get} statement one can read the return values once the called process terminates. Until then, the reading process blocks and no other process can run on the object (that is attempting to read).
\item The $\xabs{await}_\mathtt{p}~\mathtt{g}$ statement suspends the process until the guard \texttt{g} holds. 
A guard is either (1) a future poll \texttt{e?} that \EK{waits until the process for the future in \texttt{e} has finished its computation}, (2) a \abs{duration} guard that advances time, or (3) a differential guard \abs{diff e} that holds once expression \texttt{e} evaluates to true.
\EK{Each such statement has a (program-wide unique) suspension point identifier $\mathtt{p}$, which we use to identify the most recent suspension in a trace.}
\end{itemize}

We assume that all methods are suspension-leading, i.e., each method starts with an \abs{await} statement. This is easily achieved by adding \abs{await diff true} if a method is not suspension-leading without significant changes to the behavior\footnote{The difference is that the process is scheduled and descheduled immediately at its start.}.
Concerning the \abs{physical} block, we only admit blocks describing trivial behavior for all non-\abs{physical} fields.
Finally, we only consider well-typed (w.r.t.\ data types) programs, where differential guards contain only \abs{Real}-typed variables.

\subsection{Semantics}
The runtime semantics is a transition system of the form \mbox{$\mathit{tcn}_1 \rightarrow \mathit{tcn}_2$},
where the configurations $\mathit{tcn}$ have the form $\mathsf{clock}(t)~\mathit{cn}$ for some $t>0$ and
a configuration $\mathit{cn}$, which consists of objects and processes.
For readability's sake, we give the full formal definition in the appendix, as the exact formalization of state adds no further insights here, and only define runtime objects formally.

\begin{definition}[Runtime Objects]
A runtime object has the form 
\[(o, \rho, {\mathsf{ODE}, f}, \mathit{prc}, q)\ \]
\end{definition}
An Object has
an identifier $o$, an object store $\rho$ \EK{that maps the object fields to their values}, the current dynamic
$f$, an active process $\mathit{prc}$ and a set of inactive
processes $q$ as its parameters.  The physical behavior description $\mathsf{ODE}$ is taken from the class declaration. The full runtime syntax of \texttt{HABS} is given it Def.~\ref{def:rsyntax}.


\paragraph{Runs}
The semantics of a programs is expressed as a set of \emph{runs}. A run generated by the operational semantics. For each run, we also generate a set of traces, one per object. 

A trace $\theta$ is a mapping from $\mathbb{R}^+$ to states, meaning that at time $t$ the state of the program is $\theta(t)$.
A trace is extracted from a run by interpolating between two configurations resulting from discrete steps using the last solution.
We say that
$\mathsf{clock}(t_i)~\mathit{cn}_i$ is the final configuration at
$t_i$ in a run, if any other timed configuration
$\mathsf{clock}(t_i)~\mathit{cn}_i'$ is before it.
\begin{definition}[Traces]\label{def:trace}
The initial configuration of a program $\mathtt{Prgm}$ is denoted $\mathit{cn}_0$~\cite{BjorkBJST13}.
A run of $\mathtt{Prgm}$ is a
(possibly infinite) reduction sequence
\[\mathsf{clock}(0)~\mathit{cn}_0 \rightarrow \mathsf{clock}(t_1)~\mathit{cn}_1 \rightarrow \cdots\]
A run is \emph{time-convergent} if it is infinite and $\lim_{i \mapsto \infty} t_i < \infty$.
A run is \emph{locally terminated} if every process occurring within the run terminates normally.

For each object $o$ occurring in the run, its \emph{trace} is defined as $\theta_o$: 

\scalebox{0.9}{\begin{minipage}{\columnwidth}
\begin{align*}
\theta_o(x) = \cased{
\mathit{undefined} &\text{if $o$ is not yet created}\\
\rho &\text{if $\mathsf{clock}(x)~\mathit{cn}$ is the final configuration at $x$}\\ 
&\text{ and $\rho$ is the \EK{store} of $o$ in $\mathit{cn}$.}\\
\mathit{adv}_{\mathit{heap}}(\rho,f,x-y) &\text{if there is no configuration at $\mathsf{clock}(x)$}\\
&\text{and the last configuration was at $\mathsf{clock}(y)$}\\
&\text{with state $\rho$ and dynamic $f$ }
}
\end{align*}
\end{minipage}}

with the following auxiliary function to advance the \EK{store} $\rho$ by $t$ time units according to dynamics $f$.
\[
\mathit{adv}_\mathit{heap}(\rho,f,t)(\xabs{f}) =
\cased{
\rho(\xabs{f}) & \text{ if \abs{f} is not physical} \\
f(t)(\xabs{f})  & \text{ otherwise }}
\]
\end{definition}
The full definition is given in Fig.~\ref{fig:tsem} and illustrated in Ex.~\ref{ex:app} in the appendix.
We normalize all traces and let them start with 0 by shifting all states by the time the object is created.

\begin{example}
    Consider Ex.~\ref{ex:eventroom} and an object where the initial value is 5, i.e., \abs{inVal = 5}.
    It evaluation has the first discrete steps at time 0, 1. The state after the transition is as follows:
    \begin{align*}
        t = 0 &\quad \{\xabs{level} = 5, \xabs{drain} = -1\} \\
        t = 1 &\quad \{\xabs{level} = 4, \xabs{drain} = 1\} \\
        t = 8 &\quad  \{\xabs{level} = 10, \xabs{drain} = -1\} 
    \end{align*}
    The trace $\theta$ thus has the following properties at these times (as per the second case in the above definition:
    \begin{align*}
        \theta(0) &= \{\xabs{level} = 5, \xabs{drain} = -1\} \\
        \theta(1) &= \{\xabs{level} = 4, \xabs{drain} = 1\} \\
        \theta(8) &= \{\xabs{level} = 10, \xabs{drain} = -1\} 
    \end{align*}
    The general solution of the dynamics is 
    \begin{align*}
        \xabs{level}(t) &= \xabs{level}(t_0)+\xabs{drain}(t_0)*t\\
        \xabs{drain}(t) &= \xabs{drain}(t_0)
    \end{align*}
    This is used to define the value of $\theta$ in between. For example for $0 < x < 1$ we have 
    $\xabs{level}(t_0) = \xabs{level}(0) = 5$ and, 
    thus
    \[\theta(x)(\xabs{level}) = \xabs{level}(x-0) = 5+\theta(x-0)(\xabs{drain})*x=5-1*x\]
\end{example}

As we will see later, we must be able to soundly overapproximate the states after a suspension and before the next process is scheduled. To make this precise, we use the notion of \emph{suspension-subtraces}.
\begin{definition}[Suspension-Subtraces]
Let \xabs{C.m} be a method in some program $\mathtt{Prgm}$.
Let $\trace_o$ be a trace, stemming from some run of $\mathtt{Prgm}$ for some object $o$ of class \abs{C}. 
Let $i$ be the index in $\trace_o$ where some process of $\mathtt{m}$ suspends and terminates.
We say that $\trace_o^i$ is the suspension-subtrace of $\trace_o$, if it starts at $i$ and ends at (including) the time where the next non-trivial\footnote{I.e., a process that performs any action instead of descheduling immediately.} process is scheduled. If there is no such time, then $\trace_o^i$ is infinite.
Additionally, $\trace_o^i$ has a variable $\mathtt{t}$ with $\trace^i_o(0)(\mathtt{t}) = 0$ and $\mathtt{t}' = 1$. I.e., a clock that keeps track of the length of $\trace_i$.

The set of all suspension-subtraces of $\mathtt{m}$ in $\mathtt{Prgm}$ is denoted  $\Theta(\mathtt{m},\mathtt{Prgm})$.
\end{definition}
Suspension subtraces are exactly the traces between two discrete steps with length $>0$.
They contain the states where time advances and no process is active (for a given object).
In the above example, $\theta$ has one suspension-subtraces. It defined by the restriction of the domain to $0\leq t \leq 1 $. For $\theta(1)$.



\label{sec:intern}
\subsection{Differential Dynamic Logic}
To verify \HABS, we generate proof obligations that encode that a certain statement or physical process has a certain post-condition. Our logic of choice is differential dynamic logic (\ddl)~\cite{DBLP:journals/lmcs/Platzer12b,Platzer18}, a first-order dynamic logic embeds \EK{\emph{hybrid algebraic programs}} into its modalities.
Hybrid algebraic programs are defined by a simple \EK{imperative} language, extended with a statement for ordinary differential equations. 
Such a statement evolves the state according to some dynamics for a non-deterministically chosen amount of time. 

\begin{definition}[Syntax of $d\mathcal{L}$]
Let $p$ range over predicate symbols (such as $\doteq,\geq$), $f$ over function symbols (such as $\xabs{+}$) and \abs{x} over variables.
Hybrid algebraic programs $\alpha$, formulas $\phi$ and terms $t$ are defined by the following grammar.
\begin{align*}
\phi ::=~&p(\many{t}) \synsep \neg\phi \synsep \phi \wedge \phi \synsep \exists \xabs{x}.~\phi \synsep [\alpha]\phi\qquad
t ::=f(\many{t}) \synsep \xabs{x} \qquad \mathit{dt} := f(\many{dt}) \synsep t \synsep (t)'\\
\alpha ::=~&\xabs{x :=}~t \synsep \xabs{x := *} \synsep \alpha \cup \alpha \synsep \alpha^\ast \synsep ?\phi \synsep \alpha;\alpha \synsep \{\alpha\} \synsep \many{\xabs{x =}~\mathit{dt}} \& \phi 
\end{align*}
\end{definition}
In the following, we use 
the usual derived connectives ($\rightarrow,\vee,\forall$) for brevity.
Modalities $[\cdot]$ contain hybrid algebraic programs and may be nested using the $?$ operator. All ODEs are autonomous.
The semantics of hybrid programs is as follows:
Program $\xabs{x :=}~t$ assigns the value of $t$ to \abs{x}. 
Program $\xabs{x := *}$ assigns a non-deterministically chosen value to \abs{x}. 
Program $\alpha_1 \cup \alpha_2$ is a non-deterministic choice.
Program $\alpha^\ast$ is the Kleene star.
Program $?\phi$ is a test or filter. It either discards a run (if $\phi$ does not hold) or performs no action (if $\phi$ does hold).
Program $\alpha_1;\alpha_2$ is sequence and $\{\alpha\}$ is a block for structuring. 
Finally, the statement $\many{\xabs{x =}~\mathit{dt}} \& \phi$
evolves the state according to the given ODE in the evolution domain $\phi$ for some amount of time. 
The evolution domain describes where a solution is allowed to evolve, not the solution itself.
The semantics of the first-order fragment is completely standard.
The semantics of $[\alpha]\phi$ is that $\phi$ has to hold in \emph{every} post-state of $\alpha$ if $\alpha$ terminates.
We stress that if $\alpha$ is an ODE, then this means that $\phi$ holds throughout the \emph{whole} solution.

\begin{example}
The following formula expresses that the position of a bouncing ball
with initial position \abs{x} below 10 meters and initial null velocity \abs{v} is below 10 before reaching the ground (given that the gravity is 9.81$m/s$).
\[0 \leq \xabs{x} \leq 10 \wedge \xabs{v} \doteq 0 \rightarrow [\xabs{x}' = \xabs{v}, \xabs{v}' = -9.81 \& \mathtt{x} \geq 0]\xabs{x} \leq 10\]
Events can be expressed as usual by an event boundary created between a test and an evolution domain. 
The following program models that the ball repeatedly bounces back exactly on the ground.
\[\big(\{\xabs{x}' = \xabs{v}, \xabs{v}' = -9.81 \& \mathtt{x} \geq 0\}; ? \mathtt{x} \leq 0; \xabs{v} := -\xabs{v}*0.9\big)^\ast\]
\end{example}
We identify \HABS variables and fields with $d\mathcal{L}$ variables and denote with
$\mathsf{trans}(\xabs{e})$ the straightforward translation of \HABS expressions into $d\mathcal{L}$ terms. 
Standard control flow constructs ($\mathsf{while}$, $\mathsf{if}$) are encoded using the operators above~\cite{DBLP:conf/lics/Platzer12b}.

\EK{Weak negation $\tilde\neg$ is needed to define event boundaries. It is defined analogously to normal negation, except for weak inequalities:
\[\tilde\neg(t_1 \geq t_2) \equiv t_1 \leq t_2\]
}

\subsection{Post-Region Invariants}\label{ssec:post}
To verify an object invariant, one generates a proof obligation in dynamic logic for each method, and one for the constructor. If all proof obligations can be closed, i.e., the dynamic logic formulas are all valid, then the object invariant holds at every point a process starts, ends, suspends or regains control. This approach is modular, as changes is one method do not require to reprove other methods.

There are such systems for numerous discrete object-oriented languages, e.g., Java (in the KeY-system using Java Dynamic Logic~\cite{key}) and \texttt{ABS} (in KeY-ABS~\cite{DinBH15} using ABS Dynamic Logic~\cite{DinO15} and in Crowbar~\cite{DBLP:journals/corr/abs-2102-10127} using Behavioral Program Logic~\cite{bpl}).
In the most basic case the proof obligations for an invariant $I$ take the following form for discrete languages:
\[ I \rightarrow [\mathtt{s}]I \]
where $\mathtt{s}$ is the method body of the method in question; and for the constructor
\[\mathsf{true} \rightarrow [\mathtt{s}]I\quad.\]
The main idea is that the constructor always establishes the object invariant and each method preserves it. Each method may assume the invariant, because the last process established it and in discrete system, \emph{state does not change} when no process is active. This is not the case for hybrid systems, the above proof obligation scheme is \emph{not} sound.

To accommodate hybrid systems the proof obligation scheme must incorporate the dynamics in the post-condition, as a so called \emph{post-region invariant}~\cite{Kamburjan21a}. The case for methods is the following:
\[ I \rightarrow \big[\mathtt{s}\big]\big(I \wedge [\mathsf{dyn} \& \phi]I\big) \]
where $\mathsf{dyn}$ are the dynamics and $\phi$ is the \emph{post-region}.
The post-region is the region where the dynamics must be safe. 
We say that $I$ is the post-region invariant for $\phi$ and stress that it is necessary to establish $I$ as a pure post-condition as well -- it may be the case that $\phi \equiv \mathsf{false}$, i.e., that the next process starts \emph{without} time advance\footnote{\EK{The post-region is a part of the state space of the object, with time as a dimension.}}. To ensure that this next process can also assume $I$, it is necessary to add $I$ without dynamics to the post-condition.

If e \emph{basic} post-region is just $\mathsf{true}$, i.e., the dynamics must stay safe forever. 
In general, basic post-regions are not sufficient -- consider the two models in Ex.~\ref{ex:timedroom} and Ex.~\ref{ex:eventroom}: these systems are not safe for an unlimited time, instead there are internal control loops that define when a discrete computation will start. I.e., it suffices to restrict $\phi$ to the region where it is \emph{not} guaranteed that another method will take over. One can easily extend the precondition, if the method starts with a guard, by adding the guard to the left-hand side of the implication.

Two further possible ways to soundly compute more precise \emph{internal} post-regions were proposed: structural control and (method-)local control~\cite{Kamburjan21a}. They have in common that they are local -- the post-region is computed based on information derived from a single class. They cannot, however, verify the above examples.

Next, we define the formalization of general soundness for post-regions~\cite{Kamburjan21b}, which parameterizes the proof obligation scheme with post-region generators. 
%
In the following, we denote the specified invariant for a class \abs{C} with $I_\xabs{C}$. For initialization, a constraint on the initial values of the externally initialized fields may be specified. This \emph{creation condition} is denoted $\mathsf{pre}_\xabs{C}$ and used a precondition for the constructor.

\subsection{General Proof Obligation Scheme}
A proof obligation scheme defines a set of \ddl-formulas, such that validity of all these formulas implies safety of the program. The scheme we give here is parametric \EK{in} the notion of post-region, as well as in the specification. Method contracts, \EK{in the sense of pre-/postcondition pairs,} are not of interest here, we only use a precondition $\mathsf{pre}_\xabs{C.m}$, which is a first-order formula over the method parameters, and a postcondition $\mathsf{post}_\xabs{C.m}$, which is a first-order formula over the fields of the class.
\EK{Similarly, $\mathsf{init}_\xabs{C}$ is the precondition of the initial block/constructor and $I_\xabs{C}$ is the class invariant.}

\EK{Before we define the proof obligation scheme, we must establish some auxiliary structures.}
\begin{itemize}
    \item 
We assume two variables \abs{t} and \abs{cll}. Variable \abs{t} keeps track of time and variable \abs{cll} keeps track of contract violations. This is necessary, because the post-condition is evaluated at the end of the methods and intermediate failures must be remembered until then.

    \item The $\mathsf{fail}$ statement sets \abs{cll} to 1, i.e., records a contract violation.
    \[\mathsf{fail} = \xabs{cll :=}~1\]
    \item The $\mathsf{havoc}$ statement sets all fields, including all physical fields, \emph{but not local variables} to new values.
    This is used to approximate suspension, where another process can run, but only change fields.
    \[\mathsf{havoc} = \xabs{f}_1\xabs{:= *;}\dots\xabs{f}_n\xabs{:= *;}\quad\text{\it for all fields $\xabs{f}_i$ }\]
    \item The $\mathsf{havoc}^\mathsf{ph}$ statement sets all \emph{physical} fields to new values.
    \[\mathsf{havoc}^\mathsf{ph} = \xabs{f}_1\xabs{:= *;}\dots\xabs{f}_n\xabs{:= *;}\quad\text{\it for all physical fields $\xabs{f}_i$ }\]

\item \EK{The post-region formula} $\mathsf{pr}(\phi, I, \mathsf{ode})$ expresses that a certain invariant $I$ holds for dynamics $\mathsf{ode}$ under post-region $\phi$:
\[
\mathsf{pr}(\phi, I, \mathsf{ode}) = I \wedge \big[\xabs{t} \xabs{:=} 0;\{\mathsf{ode},\xabs{t}' = 1\& \phi\}\big] I\\
\]
\end{itemize}
Whenever $I$ and $\mathsf{ode}$ are understood, we just write $\mathsf{pr}(\phi)$.
Next, we define the proof obligation scheme itself.
\begin{definition}[Proof Obligation Scheme]\label{def:scheme}
Let \abs{Prgm} be a program. For each class \abs{C} in the program, let $\mathsf{ode}_\xabs{C}$ be its dynamics and $\statement_\xabs{C.init}$ the code of the constructor.
For every method \abs{C.m} let $\statement_\xabs{C.m}$ be the method body.
Let $\psi$ be a family of post-regions, i.e, formulas over the physical fields of a class and \abs{t}, indexed with (1) method names including the constructor, and (2) suspension point identifiers. Let $\statement_\xabs{main}$ be the statement of the main block.
The proof obligation scheme $\iota^\psi$ \EK{for family $\psi$ is a function from methods and initial block to formulas, defined as follows. We use the subscript notation for $\iota$.}
For every class \abs{C}, there is one formula 
\[\iota_\xabs{C.init}^\psi \equiv \mathsf{init}_\xabs{C} \wedge \xabs{cll} \doteq 0\rightarrow \big[\mathsf{trans}(\statement_\xabs{init})\big]\big(\xabs{cll} \doteq 0 \wedge \mathsf{pr}(\psi_\xabs{C.init}, I_\xabs{C}, \mathsf{ode}_\xabs{C})\big)\]
for each method \methodname in \xabs{C} one formula

\noindent\resizebox{\textwidth}{!}{
\begin{minipage}{1.05\textwidth}
\[\iota_\xabs{C.m}^\psi \equiv I_\classname \wedge \mathsf{pre}_{\xabs{C.m}} \wedge \xabs{cll} \doteq 0 \rightarrow \big[\mathsf{trans}(\statement_\xabs{C.m})\big]\big(\xabs{cll} \doteq 0 \wedge
\mathsf{post}_{\xabs{C.m}} \wedge  \mathsf{pr}(\psi_\xabs{C.m}, I_\xabs{C}, \mathsf{ode}_\xabs{C})\big)\]
\end{minipage}
}
and for the main block the formula
\[ \iota_\xabs{main}^\psi \equiv\mathsf{cll} \doteq 0 \rightarrow  \big[\mathsf{trans}(\statement_\xabs{main})\big]\xabs{cll} \doteq 0 \quad.\]
The translation $\mathsf{trans}$ of \HABS statements into \ddl statements is given in Fig.~\ref{fig:trans}.
\begin{figure}[t!bh]
\scalebox{1}{\begin{minipage}{\columnwidth}
\begin{align*}
\mathsf{trans}(\statement_1;\statement_2) &= \mathsf{trans}(\statement_1)\xabs{;}\mathsf{trans}(\statement_2) \\
\mathsf{trans}(\xabs{if(e)}\{\statement_1\}\xabs{else}\{\statement_2\}) &= 
\mathsf{if}(\mathsf{trans}(\xabs{e}))\{\mathsf{trans}(\statement_1)\} \mathsf{else} \{\mathsf{trans}(\statement_2)\}\\
\mathsf{trans}(\xabs{while(e)}\{\statement\}) &= \mathsf{while}(\mathsf{trans}(\xabs{e}))\{\mathsf{trans}(\statement)\} \\
\mathsf{trans}(\xabs{v = e}) &= \xabs{v:=}~\mathsf{trans}(\xabs{e})\\
\mathsf{trans}(\xabs{return e}) &= \xabs{result :=}~\mathsf{trans}(e)
\end{align*}
\begin{align*}
\mathsf{trans}(\xabs{v = e.get}) &= 
\big\{\{?\mathsf{pr}(\mathsf{true})\}\xabs{;}\mathsf{havoc}^{\mathsf{ph}}\xabs{;}?I_\xabs{C}\xabs{;}\xabs{v :=*}\big\} \cup \big\{\{?\neg\mathsf{pr}(\mathsf{true}); \mathsf{fail}\}\xabs{;}\mathsf{havoc}^{\mathsf{ph}}\xabs{;}\xabs{v :=*}\big\}\\
\mathsf{trans}(\xabs{await}_\mathtt{p}~\xabs{g}) &= 
\phantom{\cup}\{?\mathsf{pr}(\psi_\xabs{p} \wedge \tilde{\neg}\mathsf{trans}(\xabs{g}))\}\xabs{;}\mathsf{havoc}\xabs{;}?\mathsf{trans}(\xabs{g}) \wedge I_\xabs{C}\} \\
&\phantom{=}\cup 
\{?\neg\mathsf{pr}(\psi_\xabs{p} \wedge\tilde{\neg}\mathsf{trans}(\xabs{g}))\xabs{;}\mathsf{fail}\}\xabs{;}\mathsf{havoc}\xabs{;}?\mathsf{trans}(\xabs{g})\} 
\end{align*}
\begin{align*}
\mathsf{trans}&(\xabs{v = e!m(e}_1,\dots\xabs{)}) =
\big\{\{?\mathsf{pre}_{\methodname}(\xabs{e}_1,\dots) \} \cup \{?\neg\mathsf{pre}_{\methodname}(\xabs{e}_1,\dots)
\xabs{;}\mathsf{fail} \}\big\}\xabs{;}
\xabs{v := *} \\
\mathsf{trans}&(\xabs{v = new C(e}_1,\dots\xabs{)}) =
\big\{\{?\mathsf{init}_{\classname}(\xabs{e}_1,\dots) \} \cup \{?\neg\mathsf{init}_{\classname}(\xabs{e}_1,\dots)
\xabs{;}\mathsf{fail} \}\big\}\xabs{;}
\xabs{v := *} \\
\mathsf{trans}&(\xabs{duration(e)}) = 
~\xabs{t := 0;}\big\{\{?\mathsf{pr}(\xabs{t} \leq \xabs{e})\} \cup \{?\neg\mathsf{pr}(\xabs{t} \leq\xabs{e}); \mathsf{fail}\}\big\}\xabs{;}\\
&\hspace{22.5mm}\xabs{t := 0;}\{\mathsf{ode},\xabs{t}' = 1 \& \xabs{t} \leq \mathsf{trans}(\xabs{e}) \}\xabs{;} ?\xabs{t} \geq \mathsf{trans}(\xabs{e}) 
\end{align*}
\begin{align*}
\mathsf{trans}(\xabs{f}) &= f\text{ , where $f$ is the \ddl variable representing field \xabs{f}}\\
\mathsf{trans}(\xabs{v}) &= v\text{ , where $v$ is the \ddl variable representing variable \xabs{v}}\\
\mathsf{trans}(\xabs{e}_1~\mathit{op}~\xabs{e}_2) &= \mathsf{trans}(\xabs{e}_1)~\mathit{op}~\mathsf{trans}(\xabs{e}_2)\\
\mathsf{trans}(\xabs{diff e}) &= \mathsf{trans}(\xabs{e})\\
\mathsf{trans}(\xabs{g}) &= \mathsf{true}\text{ if \abs{g} does not have the form \abs{diff e}}\\
\end{align*}
\end{minipage}}
        
\caption{Translation of \HABS-statements into \dL programs.}
\label{fig:trans}
\end{figure}
\end{definition}
If $\psi_{\xabs{C.m}}$ is the same for all methods and the class is understood, then we drop the index.
We first examine the proof obligation for normal methods. 
The precondition $I_\classname \wedge \mathsf{pre}_{\xabs{C.m}} \wedge \xabs{cll} \doteq 0$ expresses that the object invariant and the method precondition hold. The last term initializes the special variable \abs{cll}.
The first term of the post-condition (of the modality) expresses that no intermediate check failed the proof and \abs{cll} is still $0$.
The second term \EK{checks} the method post-condition and the last term ensures safety in the post-region.
As previously discussed, it takes the form $I_\xabs{C} \wedge \left[\xabs{t} \xabs{:=} 0;\{\mathsf{ode},\xabs{t}' = 1\& \phi\}\right] I_\xabs{C}$. It expresses that the invariant holds when the method terminated and that it is an invariant for the dynamics in a defined post-region. 

The proof obligation for constructors is analogous but (1) does not assume the invariant, as it has not been established yet, and (2) assumes the creation condition as its precondition. A constructor has no post-condition. The proof obligation for the main block is \emph{only} checking that all calls and object creations adhere to the respective precondition, as it runs outside of any object and, thus, has no additional specification.

The translation of statements into \ddl-programs works as follows.
We consider all fields as variables for the translation.
The translation of sequence, branching, loops and assignment of side-effect free expressions to location is straightforward. We can ignore the expression of \abs{return} statements as invariants cannot specify return values. The other statements are translated as follows:
\begin{itemize}
    \item Synchronization with \abs{get} first checks $\mathsf{pr}(\mathsf{true})$, if the formula does not hold, then verification fails. This models that during synchronization, time may pass and the invariant must, thus, hold. It is \emph{not} sound to assume the post-region $\psi$ here: synchronization blocks, so no other process can run. Furthermore, it may stay blocked for an unbound amount of time, so the invariant must hold for an unbound amount of time as well.
    Afterwards, \abs{v} is set to a new, unknown value, as the return value in a future is not specified. Additionally, $\mathsf{havoc}^\mathsf{ph}$ is used to model that the physical fields may have changed during the synchronization. In case the check succeeds, the invariant is known to hold for the new values of the fields.
    \item Suspension with \abs{await} is similar, but uses both the post-region $\psi$ \emph{and} the guard.
    We use $\mathsf{havoc}?\xabs{;}\mathsf{trans}(\xabs{g})$ to set all fields (but not variables) to new values -- contrary to the case of \abs{get}, the non-physical fields may have been changed by another process.
    For the new values only the guard is known to hold. The invariant is also known to hold, but only if the check succeeds. We stress again that time advancement is modeled in the contained modality.
    \item Method calls check the precondition of the called method. Again, \abs{v} is set to a new, unknown value to model a fresh future.
    \item Object creation is analogous to method calls.
    \item Finally, blocking time advance is similar to synchronization using \abs{get}, with two differences: 
    First, while it is still not sound to use $\psi$, we may limit the time spent executing this statement.
    Second, instead of causing $\mathsf{havoc}$, we can precisely simulate the state change by advancing the dynamics for the amount of time given in the \abs{duration} statement.
\end{itemize}

A scheme generates one formula per entity. Its ultimate aim, however is to establish a safety property for the overall system. Indeed, if we use the post-region \texttt{false} for all methods, we may be able to show validity of all formulas -- yet it does mean that the system is safe. We, thus, need a formalization of
the conditions when the validity of all proof obligations generated by a scheme imply safety.

Formally, a scheme is sound if validity of all generated formulas is sufficient to prove safety of all class with respect to their invariants, and safety of all methods with respect to their contracts. We consider partial correctness~\cite{DBLP:journals/cacm/Hoare69}, i.e., we do not consider deadlock and non-terminating programs.
\begin{definition}[Sound Proof Obligation Scheme]\label{def:soundscheme}
If the validity of all proof obligations from $\iota^\psi$ implies that for all 
locally terminating, time-divergent runs, 
$\mathsf{inv}_{\xabs{C}}$ holds in every state of every trace of every object $o$ realizing any class \xabs{C}, whenever (1) $o$ is inactive or (2) time advances, and that  
the pre-condition of a method holds in every prestate and the post-condition in every poststate,
then we say that $\iota^\psi$ is sound.
\end{definition}
Condition (2) expresses that the object stays safe whenever time advances, even if a method is already active. This is critical, as otherwise an object would be in an unsafe state, but would still be considered safe if it, for example, performs a non-suspending \abs{duration}.

We can break down soundness of the scheme into two parts: it must describe the discrete transitions correctly, and it must describe the suspension-subtraces correctly. For the former, we observe that this property can be shown by reasoning about the translation function $\mathsf{trans}$ -- we, thus, only need a formalization for the later. 

We remind that suspension-subtraces contain the states where time advances and no process is active, and that (non-trivial) post-regions are not used when time advances and a process is active, for example during the execution of an \abs{get} statement.
\begin{definition}[Sound Post-Regions]
Let $\mathbf{Prgm}$ be a set of programs, all containing a class \abs{C} with a method \abs{m}. 
Let $\psi$ be a first-order formula over the fields of \abs{C} and the variables of \abs{m}.
We say that $\psi$ is $\mathbf{Prgm}$-sound for \abs{C.m}, if
every state of every suspension-subtrace of every program in $\mathbf{Prgm}$ is a model for $\psi$:
\[\forall \mathtt{Prgm} \in \mathbf{Prgm}.~\forall \trace \in \Theta(\mathtt{m},\mathtt{Prgm}).~\forall i \leq |\trace|.~\trace[i] \models \psi\]
\end{definition}

This is indeed sufficient -- to show soundness of the proof obligation scheme, it suffices that the used post-region generator is sound: The following theorem~\cite{Kamburjan21b} states that soundness of post-regions implies soundness of proof obligation schemes.
\begin{theorem}[\cite{Kamburjan21b}]\label{thm:main}
If $\psi_\xabs{C.m}$ is $\mathbf{Prgm}$-sound for all \abs{C.m}, then the proof obligation scheme of Def.~\ref{def:scheme} is sound for $\mathbf{Prgm}$ in the sense of Def.~\ref{def:soundscheme}.
\end{theorem}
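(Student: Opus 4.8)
The plan is to assume that every formula in $\iota^\psi$ is valid and that every $\psi_\xabs{C.m}$ is $\mathbf{Prgm}$-sound, and then, as the text above anticipates, to split the argument into a \emph{discrete} part — the translation $\mathsf{trans}$ faithfully over-approximates the reachable states of each run of a method body — and a \emph{continuous} part — the post-region formulas faithfully over-approximate the suspension-subtraces; only the latter uses the soundness hypothesis on the $\psi$'s. Concretely I would fix an arbitrary locally terminating, time-divergent run $R$ of some $\mathtt{Prgm}\in\mathbf{Prgm}$ and induct along the time-ordered sequence of \emph{scheduling points} of $R$, i.e.\ the configurations at which a process of some object starts, suspends, resumes, or terminates. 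The induction hypothesis is the conjunction of: (i) whenever a process of \xabs{C.m} is about to start, the object store models $I_\xabs{C}\wedge\mathsf{pre}_\xabs{C.m}$; (ii) whenever a process resumes or a method of \xabs{C} terminates, the store models $I_\xabs{C}$, and at termination also $\mathsf{post}_\xabs{C.m}$; (iii) every state of every suspension-subtrace seen so far models the invariant of the pertinent class. From (ii) and (iii) the statement of Def.~\ref{def:soundscheme} follows, since every state of $\theta_o$ in which $o$ is inactive (whether or not time advances there) lies in a suspension-subtrace, while the states in which time advances with a process of $o$ active are exactly those traversed inside a \abs{duration} or a blocking \abs{get}, covered by the $\mathsf{pr}(\xabs{t}\le\xabs{e})$ and $\mathsf{pr}(\mathsf{true})$ checks discussed next.

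For the discrete part the key lemma is a simulation: from a state $\sigma\models I_\classname\wedge\mathsf{pre}_\xabs{C.m}\wedge\xabs{cll}\doteq 0$, the \HABS execution of $\statement_\xabs{C.m}$ up to its first \abs{await} or its termination is mirrored by a distinguished run of the \dL program $\mathsf{trans}(\statement_\xabs{C.m})$ obtained by resolving every non-deterministic choice to the branch avoiding $\mathsf{fail}$, instantiating every $\mathsf{havoc}$ and every $\xabs{v}:=\ast$ by the value actually produced at runtime, and instantiating the ODE inside $\mathsf{trans}(\xabs{duration(e)})$ by the solution actually taken. Since $\iota^\psi_\xabs{C.m}$ is valid and its modality is a box with $\xabs{cll}\doteq 0$ among the post-conditions, \emph{no} run of $\mathsf{trans}(\statement_\xabs{C.m})$ from such a $\sigma$ can pass a test $?\neg\mathsf{pre}_{\methodname}(\dots)$, $?\neg\mathsf{init}_\classname(\dots)$, or $?\neg\mathsf{pr}(\dots)$ and still reach a state; in particular the distinguished run does not, so along it every call and object creation meets its callee's precondition (discharging the contract side of Def.~\ref{def:soundscheme}), at each \abs{await} with suspension point \xabs{p} the formula $\mathsf{pr}(\psi_\xabs{p}\wedge\tilde\neg\mathsf{trans}(\xabs{g}))$ holds, at each \abs{duration(e)} the formula $\mathsf{pr}(\xabs{t}\le\xabs{e})$ holds, at each blocking \abs{get} the formula $\mathsf{pr}(\mathsf{true})$ holds, and at termination $\mathsf{post}_\xabs{C.m}\wedge\mathsf{pr}(\psi_\xabs{C.m},I_\classname,\mathsf{ode}_\classname)$ holds. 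Each of these has $I_\classname$ as its leading conjunct, which gives (i) and the active-process half of (ii); resumption is where the continuous part is needed.

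For the continuous part, consider any suspension-subtrace $\trace_o^i$ occurring in $R$ and the scheduling point at its end (a resumption, or the first scheduling of a non-trivial process of $o$). Its start is the \emph{most recent} prior suspension or termination in $o$, or $o$'s creation; in the suspension case, say of \xabs{C.m} at suspension point \xabs{p}, the discrete lemma and the induction hypothesis give that $I_\classname$ held there together with $\mathsf{pr}(\psi_\xabs{p}\wedge\tilde\neg\mathsf{trans}(\xabs{g}))$ (and analogously $\mathsf{pr}(\psi_\xabs{C.m})$ at a termination, $\mathsf{pr}(\psi_\xabs{C.init})$ at creation). Since $\trace_o^i\in\Theta(\xabs{m},\mathtt{Prgm})$ for the corresponding method, $\mathbf{Prgm}$-soundness of the governing post-region makes every state of $\trace_o^i$ model it, and in the suspension case the urgent, as-soon-as-possible semantics of differential guards additionally keeps $\tilde\neg\mathsf{trans}(\xabs{g})$ true throughout $\trace_o^i$, because the guard \abs{g} is not strictly satisfied before the subtrace ends. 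Hence $\trace_o^i$ is a prefix of an evolution of $\{\mathsf{ode}_\classname,\xabs{t}'=1\,\&\,\phi\}$ from the state at its start, with $\phi$ the governing post-region (conjoined with $\tilde\neg\mathsf{trans}(\xabs{g})$ in the suspension case), so the inner box of $\mathsf{pr}$ forces $I_\classname$ in every state of $\trace_o^i$ — establishing (iii) and, in particular, $I_\classname$ at the closing scheduling point, which closes the induction; if no time advanced the two states coincide and nothing is to show. The base case is $o$'s creation, handled by validity of $\iota^\psi_\xabs{C.init}$ (with $\mathsf{init}_\classname$ supplied by the caller's obligation) and the same evolution-domain argument applied with $\psi_\xabs{C.init}$.

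I expect the main obstacle to be making the continuous step fully rigorous: proving that $\tilde\neg\mathsf{trans}(\xabs{g})$ is exactly what confines $\trace_o^i$ to the evolution domain hinges on the urgent semantics of differential guards and on event boundaries being expressed via weak inequalities (the weak-negation device); and, under cooperative scheduling with several queued processes of the same object, one must correctly identify which index — method name or suspension point — governs a given suspension-subtrace (the most recent suspension or termination) and thread the induction through the full interleaving of all objects. Strong encapsulation contains the difficulty, since no other object can touch $o$'s fields, so cross-object interaction is confined to the $\xabs{v}:=\ast$/precondition-check pattern already handled by the discrete lemma, but the intra-object interleaving still needs care. The remaining ingredients — the $\mathsf{trans}$ simulation itself, the box-with-$\xabs{cll}\doteq 0$ device for discharging the embedded checks, and the partial-correctness reading licensed by local termination and time-divergence — are routine.
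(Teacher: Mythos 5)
Your proposal is sound and follows exactly the decomposition the paper itself indicates just before the theorem statement — the discrete transitions are handled by reasoning about the translation $\mathsf{trans}$ (your simulation lemma with the $\xabs{cll}\doteq 0$ device), and the suspension-subtraces are handled by $\mathbf{Prgm}$-soundness of the post-regions together with the weak-negation/urgency argument confining each subtrace to the evolution domain. Note, however, that this paper does not reprove the theorem: it is imported from prior work~\cite{Kamburjan21b}, so there is no in-paper proof to compare against beyond that sketched decomposition, with which your argument agrees.
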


Basic post-regions are obviously sound. A slightly more complex notion is the one of locally controlled post-regions~\cite{Kamburjan21a}. For a simple example, consider a method \abs{m} without branching or suspension that calls another method \abs{called}. Method \abs{called} has the leading guard \abs{x >= 0}.
Then the post-region for \abs{m} is \abs{x <= 0} -- it describes all suspension-subtraces until another process runs, namely the one it called itself. Thus, a post-region generator that assigns \abs{x <= 0}  to \abs{m} is sound.

\EK{
Concrete examples of post-regions that are able to verify Ex.~\ref{ex:eventroom} are described in prior work~\cite{Kamburjan21a}.
In the next section, we introduce a similar system that is not verifiable with those post-regions.}
\section{Externally Controlled Timed Post-Regions}
\label{sec:externaltime}

In the following we consider timed control, where the controlling discrete process is \emph{outside} the object of the controlled physical process where the post-region is to be used. 
To retain modularity of the proof system, we aim to keep the proof obligations the same as before, but instead of \emph{deriving} that a method implements a timed controller, 
we require the user to \emph{specify} it. The overall system then has to ensure that this method is indeed \emph{globally} called with the required frequency.
This property in turn, is handled by a type system -- it is a structural property of the whole program, and as such inherently non-local. 
By using a lightweight type analysis, we keep the required user interactions during the analysis low. 

As we target a more volatile situation of IoT systems, which often come with cloud components, we allow the controlling discrete process
to change. For example, we allow one controller to shut down and another to take over.
We also allow multiple controlling discrete processes to control different aspects of a physical controller,
 e.g. an internal controller for event-based properties and an external controller for timed-based ones. 

Before we come to the formal details, we illustrate the targeted kind of system with a smaller example.
We use again JML~style comments for specification.

\begin{example}\label{ex:externaltimedhao}
Consider the upper code in Fig.~\ref{fig:room}, a variation of the timed water tank of Ex.~\ref{ex:timedroom}.
The \abs{/*@ requires ... @*/} clause specifies the creation condition and \abs{/*@ invariant ... @*/} specifies the safety invariant.
The \abs{Tank} class has the same physical behavior as before, but the \abs{ctrl} method is replaced by \abs{localCtr}
 which does \emph{not} repeatedly perform the check on its own. 
Instead, it is specified with \abs{/*@ timed_requires 1 @*/} that the method must be called at least once per time unit.

In this example, there is only one object of class \abs{Tank}, line~\ref{ex:type:ln:tank},
and the responsibility of calling \abs{localCtrl} on this object is then shared between the methods \abs{Mobile.run} and \abs{Controller.timer}.
Since the method \abs{Mobile.run} creates the tank, it become by default the initial controller to all its controlled methods.
On the other hand, the \abs{Controller.timer} method is annotated with the \abs{/*@ time_control: t.localCtrl = [1, 0] @*/} clause, which means that
 this method takes control of the method \abs{localCtrl} of its parameter \abs{t},
  waits for 1 unit of time (with the \abs{await duration(1)} statement line~\ref{ex:type:ln:await}),
  calls \abs{localCtrl} line~\ref{ex:type:ln:call}, recursively calls itself line~\ref{ex:type:ln:rec} and stops,
 leaving 0 unit of time until the next call to \abs{t.localCtrl}.

Hence, upon calling \abs{Mobile.run}, the \abs{Tank} object \abs{t} is created,
 a \abs{Controller} object \abs{c} is created, and the control of \abs{t.localCtrl} is directly transferred to \abs{c.timer}.
After 40 time units, the \abs{Mobile} instance synchronizes with \abs{c.timer} and a new \abs{Controller} takes over the control of \abs{t.localCtrl} forever.

There is subtle, timing related bug in this code.
At $t=40$, the final call to \abs{timer} does not result in a call to \abs{localCtrl}, but the \abs{await} statement is still executed
 making time advancing to $t=41$ before the method's termination.
As the newly created controller also waits for one second at the beginning of its execution, \abs{localCtrl} next call is a $t=42$:
 the required call at $t=41$ is skipped.
The lower code in Fig.~\ref{fig:room} gives a solution: by only advancing time when a call is made afterwards, the gap at $t=41$ can be avoided.

Note that this bug can be identified in the specification of the the faulty version of \abs{timer} method.
Indeed this specification states that the method:
 $i)$ waits 1 unit of time at the beginning of its execution before calling \abs{t.localCtrl};
 and $ii)$ concludes its execution with \abs{t.localCtrl} having to be called right away.
Hence it is unsound to delegate the control of the tank to sequences of calls to \abs{timer}.
The fixed version of the \abs{timer} method concludes it execution with \abs{t.localCtrl} having to be called after 1 time unit,
 and so sequences calls to \abs{timer} do correctly control the tank.

Such subtle bugs illustrate both the need for tool support in the analysis of distributed hybrid and timed systems, as well as the value of specification.
\begin{figure}
\begin{abscode}
/*@ requires 4 <= inVal <= 9  @*/
/*@ invariant 3 <= level <= 10 && -1 <= drain <= 1 @*/
class Tank(Real inVal){
  physical Real level = inVal;
  Real drain = -1;
  physical{ level' = drain; }
  /*@ timed_requires 1 @*/
  Unit localCtrl(){
    if(level <= 4) drain =  1;
    if(level >= 9) drain = -1;
  }}

class Controller(){
  /*@ requires t!= null && time_control: t.localCtrl = [1, 0] @*/
  Unit timer(Tank t, Int time){
    await duration(1); $\label{ex:type:ln:await}$
    if(time != 0) { 
      t!localCtrl(); $\label{ex:type:ln:call}$
      Fut<Unit> f = this.timer(t, time - 1); $\label{ex:type:ln:rec}$
      await f?;
  }}}

class Mobile {
  Unit run() {
    Tank t = new Tank(4); $\label{ex:type:ln:tank}$
    Controller c = new localCtrl(); Fut<Unit> f = c.timer(t, 40); $\label{ex:type:ln:call40}$
    await duration(40);     await f?; $\label{ex:type:ln:await40}\label{ex:type:ln:awaitf}$
    c = new Controller(); f = c.timer(t, -1); $\label{ex:type:ln:newc}$
}}
\end{abscode}
\begin{abscode}
class Controller(){
  /*@ requires t!= null && time_control: t.localCtrl = [1, 1] @*/
  Unit timer(Tank t, Int time){
    if(time != 0) { 
      await duration(1); $\label{ex:type:ln:await2}$
      t!localCtrl(); $\label{ex:type:ln:call2}$
      Fut<Unit> f = this.timer(t, time - 1); $\label{ex:type:ln:rec2}$
      await f?;
  }}}
\end{abscode}
\caption{An externally controlled tank with mobile control. The upper version of \abs{Controller.timer} contains a subtle bug regarding timing, which is fixed in the lower version.}
\label{fig:room}
\end{figure}
\end{example}

Let us call \coreid\ a pair of a location (a variable or a field) and a method, such as \abs{t.localCtrl} or \abs{c.timer} in our example.
There are several structural requirements that need to be checked to ensure that a control pattern such as the one presented in our example works:
 (1) for each controlled \coreid, there is always a controlling \coreid;
 (2) the controlling \coreid\ is indeed observing the specified time behavior and (3) if the controlling \coreid\ changes, there are no gaps in control.
If all these properties can be ensured, then the specification of the marked timed controller method can be used in the post-region.

The proof obligations do not change: the post-region for the methods in \abs{Tank} are defined by the frequency of the timed controller.
The proof obligations needed for the specification of \abs{Controller} are not hybrid (as the class contains no \abs{physical} block) and can be handled by discrete approaches to Active Object verification.
\begin{example}\label{ex:tank}
For \abs{Tank}, the formulas in Fig.~\ref{fig:obli} are generated.
Let $I$ be the invariant specified in Fig.~\ref{fig:room}, $\psi$ the mapping from each method to $\mathtt{t} \leq 1$ and $\mathit{dyn} \equiv \xabs{level' = drain}$. 
\begin{figure}
\resizebox{\textwidth}{!}{%
\begin{minipage}{\linewidth}
\begin{align*}
    &\iota^\psi_{\xabs{Tank.init}}\\ \equiv~&4 \leq \mathtt{inVal} \leq 9 \rightarrow \big[ \mathtt{level := inVal; drain := -1;} \big]\big(I \wedge [\mathtt{t := 0}\{\mathit{dyn}, \mathtt{t'} = 1 \& \mathtt{t} \leq 1\}]I\big)\\
    &\iota^\psi_{\xabs{Tank.localCtrl}} \\\equiv~ &I \rightarrow  \big[ \mathtt{if(level \leq 4) drain := 1;if(level \geq 9) drain := -1;}  \big]\big(I \wedge [\mathtt{t := 0}\{\mathit{dyn}, \mathtt{t'} = 1 \& \mathtt{t} \leq 1\}]I\big)
\end{align*}
\end{minipage}
}
\caption{Proof obligations for \abs{Tank} in Ex.~\ref{ex:tank}.}
\label{fig:obli}
\end{figure}
\end{example}

\subsection{Type System}
\label{subsec:ts}

As discussed previously, this type system has one unique goal:
 check that methods are called correctly with respect to their {\commentfont timed\_requires} annotation.
It is thus entirely independent from the physical aspects of {\tt HABS} and focuses only on the time aspect of a {\tt HABS} program.
In particular, this type system must, to reach its goal,
 perform a {\em time analysis} of the input program, i.e.,
 compute how much time each statement can take (in particular the \abs{await} and \abs{duration} statements).
Then it must use the information provided by this time analysis
 to keep track of the \coreid\ control relationship and ensure that all \coreid\ are correctly called.

\subsubsection{Time Analysis.}
Designing a time analysis is a difficult task, since such an analysis is undecidable in general (it includes the halting problem), yet the concrete design choices are not central to this work.
Many such design choices must be made to decide which behaviors of a program is abstracted away by the analysis,
 and possibly many complex structures and algorithms must be defined to precisely analyze the rest of the program.
Interestingly, resource analysis, and time analysis in particular, have already been defined for {\tt ABS}~\cite{DBLP:conf/pepm/AlbertAGGP12,AlbertBHJSTW14,DBLP:journals/jlp/LaneveLPR19} and it is reasonable to imagine that other time analysis, with different capabilities, will be defined in the future.
In order to take advantage of the existing (and possibly future) time analysis, we design our type system to be able to use any of them:
 our type system is thus parametric, and given any correct time analysis, ensures that methods are called correctly.
The following definition informally describes the different features of a time analysis that are needed by our type system:
\begin{definition}
A {\em Time Analysis} for a given program {\sf Prgm} is a triplet $(\CF,\tacontext,\TA)$ where:
\begin{itemize}
\item $\CF$ are expressions used to describe how time passes.
 Since explicit time advance is expressed with rationals in {\tt HABS}, $\CF$ must include $\mathbb{Q}$,
 and since some computation can take infinite time, expressions in $\CF$ must be comparable to $\infty$.
\item $\tacontext$ is a function that gives information about the {\em execution context} of methods and statements.
 Indeed, since the behavior of a method can change depending on its parameters and the state of the callee,
 it might be relevant for the time analysis to be sensitive to such execution context and give how much time lapses in a statement depending on an execution context.
\item Finally, $\TA$ is the function giving how much time a method or a statement takes depending on the current execution context.
\end{itemize}
\end{definition}

\subsubsection{The \coreid\ Control Relationship.}
Within a method, a \coreid\ can be controlled in two ways:
\begin{enumerate}
\item either a \coreid\ is locally controlled (i.e., the current method is the one responsible to call the \coreid),
 in which case we store how much time is left until a call to the \coreid\ is required;
\item either the control of a \coreid\ has been delegated
 (i.e., the current method passed the control of \coreid\ to a different process and might get the control back later),
 in which case we keep track of which future controls the \coreid, and when that future terminates.
\end{enumerate}
Consequently, a typing statement for {\tt HABS} statements has the form $\Gamma_l,\Gamma_d\typep_c \xabs{s}\typed \Gamma_l',\Gamma_d'$ where:
 $\Gamma_l$ registers the \coreid\ locally controlled;
 $\Gamma_d$ registers the \coreid\ whose control has been delegated;
 $c$ is the execution context given by and forwarded to the time analysis (with the $\tacontext$ and $\TA$ functions);
 $\xabs{s}$ is the typed statement;
 and $\Gamma_l'$ (resp. $\Gamma_d'$) is the locally controlled \coreid\ (resp. delegated \coreid) obtained after executing $\xabs{s}$.
The context $\Gamma_l$ maps \coreid s to the maximum amount of time that can lapse before the method must be called.
The context $\Gamma_d$ maps \coreid s to tuples $(\mathit{fid}, t_{\min},t_{\max},t)$ where:
 $\mathit{fid}$ is the future to which the \coreid\ has been delegated;
 $t_{\min}$ (resp. $t_{\max}$) is the minimum (resp. maximum) amount of time before the future is resolved;
 and $t$ is the maximum amount of time between $\mathit{fid}$ is resolved and the next time \coreid\ must be called.

\begin{remark}\label{rek:no-loop+ssa}
To keep the presentation of our type system simple, we suppose two restrictions on the syntax of the input program \textsf{Prgm}:
\begin{itemize}
\item \textsf{Prgm} does not contain any loop, and
\item every assignment in \textsf{Prgm} declares a new variable.
\end{itemize}
These restrictions do not \EK{limit} the expressivity of \HABS, since loops can be translated into recursive method calls,
 and variables can always be renamed in fresh variables, following a {\em Static Single Assignment} pattern.
\end{remark}

\begin{example}\label{ex:type:new:1}
To get a first impression of how this control relationship works,
 let us look at the method \abs{Mobile.run} in Fig.~\ref{fig:room},
 and look at how its execution using the faulty version of the \abs{Controller.timer} method shape the $\Gamma_l$ and $\Gamma_d$ contexts.
Since the \abs{Mobile} class does not have any field and \abs{Mobile.run} does not have any parameter,
 we can consider that this method does not control anything when it starts:
 $\Gamma_l$ and $\Gamma_d$ are both empty.
Then, after the creation of the tank \abs{t}, which contains a method annotated with {\commentfont time\_requires},
 that method must be locally controlled:
 $\Gamma_l$ now maps the \coreid\ \abs{t.localCtrl} to 1 (i.e., the content of the {\commentfont time\_requires} annotation),
 and $\Gamma_d$ is still empty.
After the creation of the controller \abs{c}, does not not contain any method annotated with {\commentfont time\_requires},
 $\Gamma_l$ and $\Gamma_d$ are left unchanged.
The call to \abs{c.timer(t,40)} does change the contexts however:
 the \abs{timer} method states in its annotation that it takes over the control of \abs{t}.
Here, our type system must first check that the control transfer is sound
(i.e., that the \abs{timer} method will not call \coreid\ too late),
and then registers the transfer in the contexts:
 $\Gamma_l$ becomes empty and $\Gamma_d$ now states that the future \abs{f} controls \coreid.
We use the time analysis to check that \abs{f} controls \coreid\ for 40 units of time,
and so the various \abs{await} statements are correct w.r.t. the control of \coreid.
After the \abs{await f} statement, the control of \coreid\ is once again local:
 $\Gamma_l$ now maps \coreid\ to 0 (i.e., the content of the {\commentfont time\_control} annotation of the \abs{timer} method),
 and $\Gamma_d$ is now empty.
Finally, the creation of a new controller \abs{c} and the call \abs{c.timer(t,-1)} is different from before:
 here the check that the control transfer is sound fails
 since $\Gamma_l$ states that \coreid\ must be called right away and the annotation on \abs{c.timer}
 states that \coreid\ will not be called sooner than after 1 unit of time.
\end{example}

\medskip

The rest of this Section first introduces the definition giving what we consider to be a well-typed program,
 and presents the different rules defining our type system.

\begin{definition}\label{def:type2}
A program \textsf{Prgm} is {\em well-typed} iff all its methods and its \abs{main} can be validated with the rules presented in the following paragraphs.
\end{definition}

\subsubsection{Typing Rules.}

The first rule we present is to check method declaration:
\begin{rules}
\entry{
  \getanncontrolling(\xabs{C.m})=[\xabs{x}_i.\xabs{m}_j\mapsto[t_j, t_j']]_{i\in I,j\in J_i}\and
  \forall i\in I,\forall j\in J_i,\, \getanncontrolled(\xabs{T}_i.\xabs{m}_j) \geq t_j \\
  \forall c\in \tacontext(\xabs{C.m}),\,\left(\begin{array}{c}
  [\xabs{x}_i.\xabs{m}_j\mapsto t_j]_{i\in I,j\in J_i},\emptyset\typep_c \xabs{s} \typed
  [\xabs{x}_i.\xabs{m}_j\mapsto t_j'']_{i\in I',j\in J_i'},\emptyset\\
  I'\subseteq I\and \forall i\in I',\,J_i'\subseteq J_i\land \forall j\in J_i,\, t_j'\leq t_j''
  \end{array}\right)
}{\xabs{C}\typep\xabs{T m(T$_1$ y$_1$, $\dots$, T$_n$ y$_n$) \{s;return e;\}}}
\end{rules}
The concluding statement of the rule means that we check the definition of method \abs{m} in class \abs{C}.
In the premise, we first check the annotation of the method.
We use two functions to access a method's annotation:
 $\getanncontrolling$ returns the information related to the {\commentfont time\_control} annotation,
 i.e., a mapping from controlled \coreid s to their pair $[t,t']$ in the annotation;
 and $\getanncontrolled$ returns the information related to the {\commentfont timed\_requires} annotation,
 i.e., how often the method must be called.
Using these two functions, the premise controls that for each controlled \coreid\ $\xabs{x}_i.\xabs{m}_j$,
 the time $t_j$ that will lapse between the beginning of the method and the first call to the \coreid\ is correct,
 i.e., it is less than the maximum time between two calls of \coreid.

The second part of the premise checks the validity of the method's body.
It collects (with $\tacontext(\xabs{C.m})$) all the execution contexts $c$ registered by the time analysis for that method,
 and analyses the validity of the method's body for each of these context individually,
 as stated in the validation predicate $\typep$ indexed with $c$.
To do that,
 it first constructs the initial contexts $\Gamma_l$ and $\Gamma_d$ where $\Gamma_l$ corresponds to the method's annotation
 (i.e., for each \coreid\ $\xabs{x}_i.\xabs{m}_j$, the time left before the next call is at most $t_j$)
 and $\Gamma_d$ is empty;
 then it analyses the method's body, which returns new contexts $\Gamma_l'$ and $\Gamma_d'$;
 and check the validity of these resulting contexts,
 i.e., the \coreid s in $\Gamma_l'$ are correct w.r.t. the annotation
 (for each \coreid\ $\xabs{x}_i.\xabs{m}_j$, the time left before the next call is at least $t_j'$ at the end of the method)
 and $\Gamma_d'$ is empty, meaning that no delegated \coreid\ need to be controlled anymore.

\begin{example}\label{ex:type:new:2}
Consider for instance the method \abs{timer} of the \abs{Controller} class in the upper part of Fig~\ref{fig:room}.
In the first part of the rule's premise, the $\getanncontrolling(\xabs{C.m})$ call gets
the method's {\commentfont time\_control} annotation which states that the method:
 takes control of the \coreid\ \abs{t.localCtrl};
 calls \coreid\ after 1 unit of time;
 and finishes its execution with \coreid\ that must be called with no delay.
Then the premise of the rule checks that this annotation is correct w.r.t. the annotation of the method \abs{Tank.localCtrl}:
 since \coreid\ will not be called after 1 unit of time, \abs{Tank.localCtrl} must not require to be called more often that this.
As \abs{Tank.localCtrl} requires to be called every 1 unit of time, this check is validated.

The second part of the premise checks that the method's body holds w.r.t.
 the method's annotation for every execution context identified by the time analysis:
 we can safely consider in this example that there is only one execution context.
Here, we initialize the $\Gamma_l$ context to the mapping $[\coreid\mapsto 1]$ and the $\Gamma_d$ context to the empty mapping.
We will see in detail with the next rules how these contexts are used and updated while type-checking statements.
But informally, we can follow the reasoning presented in Ex.~\ref{ex:type:new:1},
 to see that after executing the \abs{timer} method's body,
 the resulting context $\Gamma_l'$ is $[\coreid\mapsto 0]$,
 and the resulting context $\Gamma_d'$ is empty (as expected by the rule).
Moreover, since $\Gamma_l'$ is not empty, the last line of the premise is triggered:
 the only key in the mapping $\Gamma_l'$ is \coreid, which is declared in the method's annotation,
 and its image is 0 (i.e., the time left before the next required call to \coreid), which is consistent to the method's annotation.
\end{example}

\paragraph{Typing Statements.}

We next present the rules to handle statements within methods, based on the following judgment, which is introduced above.
\[\Gamma_l,\Gamma_d\typep_c \xabs{s}\typed \Gamma_l',\Gamma_d'\]
The first such rule deals with infinite computation.
Indeed, it is possible to delegate the control of a \coreid\ to a method that will never finish.
In that case, there is no need to take back the control of \coreid:
 we know it will be safely handled forever and we can thus simply forget about it.
\begin{rules}
\entry{
  \Gamma_l,\Gamma_d\typep_c \xabs{s}\typed \Gamma_l',\Gamma_d'\uplus[\coreid\mapsto (\mathit{fid}, t_{\min}, t_{\max}, t)] \and t_{\min} = \infty
 }{\Gamma_l,\Gamma_d\typep_c \xabs{s}\typed \Gamma_l',\Gamma_d'}
\end{rules}
This is what is written in this rule:
 if while typing a statement, we end up with a \coreid\
  that have been delegated to a method call $\mathit{fid}$ that will run forever (i.e., the minimum computation time $t_{\min}$ of $\mathit{fid}$ is infinite),
  then we can remove \coreid\ from the delegated context.

\begin{example}\label{ex:type:new:3}
This rule is suited to type-check methods similar to \abs{run} of the \abs{Mobile} class in Fig~\ref{fig:room}.
Indeed, as stated in Ex.~\ref{ex:type:new:1}, that method creates a new controller in line~\ref{ex:type:ln:newc} and
 forever delegates the control of \abs{t.localCtrl} to that controller.
Hence there is no need to keep information about that \coreid\ anymore:
 it can safely be removed from $\Gamma_d'$, which in turn makes the typing rule for method declaration hold
  as it requires for $\Gamma_d'$ to be empty.
\end{example}

\medskip

The following rule deals with conditional statements:
\begin{rules}
\entry{
  \Gamma_l,\Gamma_d\typep_c \xabs{s}_1\typed \Gamma_{l,1},\Gamma_d'\and
  \Gamma_l,\Gamma_d\typep_c \xabs{s}_2\typed \Gamma_{l,2},\Gamma_d'\\
  \dom(\Gamma_{l,1})=\dom(\Gamma_{l,2})\\
  \Gamma_l'=[\coreid\mapsto\min(\Gamma_{l,1}(\coreid), \Gamma_{l,2}(\coreid))]_{\coreid\in\dom(\Gamma_{l,1})}
}{\Gamma_l,\Gamma_d\typep_c \xabs{if(e) \{ s$_1$\} else \{ s$_2$ \}}\typed\Gamma_l',\Gamma_d'}
\end{rules}
In this rule, we type-check the two branches of the \abs{if} statement individually and require that:
 the two resulting delegated contexts are the same (i.e., they are both equal to $\Gamma_d'$);
 and that the two resulting local contexts $\Gamma_{l,1}$ and $\Gamma_{l,2}$ declare the same \coreid s.
Finally, to ensure the validity of the \abs{if} statement,
 we state that its resulting local context maps every \coreid\ to its minimum value in $\Gamma_{l,1}$ and $\Gamma_{l,2}$,
 and leave unchanged $\Gamma_d'$.

\medskip

The following rule checks sequential composition:
\begin{rules}
\entry{
  \Gamma_l,\Gamma_d\typep_c \xabs{s}_1\typed \Gamma_l',\Gamma_d'\and
  \Gamma_l',\Gamma_d'\typep_c \xabs{s}_2\typed \Gamma_l'',\Gamma_d''
}{\Gamma_l,\Gamma_d\typep_c \xabs{s$_1$ ; s$_2$}\typed\Gamma_l'',\Gamma_d''}
\end{rules}
That rules simply checks first the first statement, which gives some resulting contexts $\Gamma_l'$ and $\Gamma_d'$,
 and then the second using $\Gamma_l'$ and $\Gamma_d'$ as initial contexts.
The result of checking the sequential composition is then the result of checking the second statement.

\medskip

The \abs{await} statement is a single instruction whose only effect (of interest to our type system) is to make time pass.
This is what is written in the following rule:
 we use a specific rule, identified with the typing predicate $\typep^t_c$, to manage time passing,
 and typing the \abs{await} statement is identical to only managing the time consumed by that statement.
\begin{rules}
\entry{
  \Gamma_l,\Gamma_d\typep_c^{\mathit{t}} \xabs{await g}\typed \Gamma_l',\Gamma_d'
}{\Gamma_l,\Gamma_d\typep_c \xabs{await g}\typed\Gamma_l',\Gamma_d'}
\end{rules}
We will describe the rule for time passing using the $\typep_c^{\mathit{t}}$ judgment later.

\medskip

The following rule checks the assignment statement.
\begin{rules}
\entry{
  \Gamma_l,\Gamma_d\typep_c \xabs{rhs}\typed [\xabs{m}_i\mapsto t_i]_{i\in I},\Gamma_l',\Gamma_d'\\
  \Gamma_l''=\Gamma_l'\uplus[\xabs{x.m}_i\mapsto t_i]_{i\in I}\and
  \Gamma_l'',\Gamma_d'\typep_c^{\mathit{t}} \xabs{T x = rhs}\typed \Gamma_l''',\Gamma_d''
}{\Gamma_l,\Gamma_d\typep_c \xabs{T x = rhs}\typed\Gamma_l''',\Gamma_d''}
\end{rules}
This rule first checks the right hand side of the assignment.
Since a right hand side can create an anonymous object with methods that must be controlled,
 the result of typing a \abs{rhs} includes, in addition to the contexts $\Gamma_l'$ and $\Gamma_d'$,
 a mapping $[\xabs{m}_i\mapsto t_i]_{i\in I}$ corresponding to the newly created anonymous \coreid s.
Then the rule names these \coreid s and includes them into the local context $\Gamma_l''$ by simply adding their name \abs{x}.
Finally, the rule manages the possibility of time passing, resulting in the two final contexts $\Gamma_l'''$ and $\Gamma_d''$.

\begin{example}\label{ex:type:new:4}
In the method \abs{run} of the \abs{Mobile} class in Fig~\ref{fig:room},
 line~\ref{ex:type:ln:tank} creates a new tank \abs{t} that has the method \abs{localCtrl} that needs to be controlled.
To apply the assignment statement rule to that line,
 let first recall that the initial $\Gamma_l$ and $\Gamma_d$ contexts of that method is empty.
We will see later that the right hand side \abs{new Tank(4)} is typed
$$\emptyset,\emptyset\typep_c\xabs{new Tank(4)}\typed [\xabs{localCtrl}\mapsto 1], \emptyset,\emptyset$$
Hence, $\Gamma_l''$ in the rule is equal to $[\xabs{t.localCtrl}\mapsto 1]$.
Finally, since object creation is instantaneous, we have that $\Gamma_l'''=\Gamma_l''$ and $\Gamma_d''=\Gamma_d$:
 the rule thus correctly registers $\xabs{t.localCtrl}\mapsto 1$ to $\Gamma_l$ and keeps $\Gamma_d$ empty.
\end{example}

\medskip

The following rule, for typing the \abs{duration} statement, is identical to the one used to type the \abs{await} statement.
Indeed, the only effect of these two statements is only to make time pass,
 the only difference is that the \abs{duration} statement also blocks other processes, which is managed by the time analysis.
\begin{rules}
\entry{
  \Gamma_l,\Gamma_d\typep_c^{\mathit{t}} \xabs{duration(e)}\typed \Gamma_l',\Gamma_d'
}{\Gamma_l,\Gamma_d\typep_c \xabs{duration(e)}\typed\Gamma_l',\Gamma_d'}
\end{rules}

\medskip


The following rule deals with time passing.
\begin{rules}
\entry{
\TA(c,\xabs{s})=[t_{\min},t_{\max}]\\
\Gamma_{l,1}=[\coreid\mapsto\Gamma_l(\coreid) - t_{\max}]_{\coreid\in\dom(\Gamma_l)}\\
C=\{i\mid i\in I\land t_{i,\max}\leq t_{\min}\}\\
\Gamma_{l,2}=[\coreid_i\mapsto t_{i} + (t_{\max}-t_{i,\min})]_{i\in C}\\
\Gamma_d'=[\coreid_i\mapsto(\mathit{fid}_i, t_{i,\min}-t_{\max}, t_{i,\max}-t_{\min}, t_{i})]_{i\in I\setminus C}\\
\Gamma_l'=\Gamma_{l,1}\uplus \Gamma_{l,2}\and
\forall \coreid\in\dom(\Gamma_l'),\,\Gamma_l'(\coreid) \geq 0
}{\Gamma_l,[\coreid_i\mapsto(\mathit{fid}_i, t_{i,\min}, t_{i,\max}, t_i)]_{i\in I}\typep_c^{\mathit{t}}\xabs{s}\typed \Gamma_l',\Gamma_d'}
\end{rules}
With the function call $\TA(c,\xabs{s})$, this rule first queries the time analysis to know,
 given the current execution context $c$, what is the minimum time $t_{\min}$ and maximum time $t_{\max}$ that the current statement \abs{s} can take.
Then the rule updates the two contexts $\Gamma_l$ and $\Gamma_d$ to include this passing of time.
Updating the local context $\Gamma_l$ is quite simple,
 we just state that now the remaining time before each \coreid\ must be called has been decreased by $t_{\max}$.
Updating the delegated context $\Gamma_d$ is more subtle, since during this passing of time some method calls might have finished
 and we must retake local control of the related delegated \coreid s.
The set $C$ corresponds to all the delegated \coreid s whose control is given back to the local computation,
 since the corresponding method call is known to have finished.
We collect all these \coreid s in a new local context $\Gamma_{l,2}$,
 and for all of them, since the maximum time that lapsed between the end of the method and now is $t_{\max}-t_{i,\min}$,
 we say that the maximum amount of time that can pass until their next call is $t_{i} + (t_{\max}-t_{i,\min})$.
For the \coreid s whose control stays delegated, we simply store them in $\Gamma_d'$ and update the execution time of the related method call.

Finally, the rule checks that the locally controlled \coreid s are safe by ensuring that the time left until their next call is positive, i.e., no specified frequency is violated so far.

\begin{example}\label{ex:type:new:5:1}
Consider for instance the method \abs{timer} of the \abs{Controller} class in the upper part of Fig~\ref{fig:room}.
We already saw in Ex.~\ref{ex:type:new:2} that the initial $\Gamma_l$ context of this method is $[\coreid\mapsto 1]$
 with $\coreid=\xabs{t.localCtrl}$, while $\Gamma_d$ is empty.
The first statement of that method is \abs{await duration(1)} which clearly takes 1 unit of time:
 we can thus suppose that the time analysis states that
$$\TA(c,\xabs{await duration(1)})=[1, 1]$$
Hence, $\Gamma_{l,1}$ in the rule is $[\coreid\mapsto 0]$: \coreid\ must be called without delay.
Then, since $\Gamma_d$ is empty, so are $\Gamma_{l,2}$ and $\Gamma_d'$ in the rule.
We can then conclude the application of the rule:
 $\Gamma_l'$ is thus equal to $[\coreid\mapsto 0]$, that validates the last constraint of the rule's premise.
\end{example}

\begin{example}\label{ex:type:new:5:2}
Another interesting application of the time passing rule is the \abs{await duration(40)} statement
 line~\ref{ex:type:ln:await40} in the upper part of Fig~\ref{fig:room}.
At that point of the method's execution, the future \abs{f} is running and controlling the \coreid\ \abs{t.localCtrl}.
We can suppose that the time analysis correctly identified \abs{f}'s information,
 and so before line~\ref{ex:type:ln:await40} $\Gamma_l$ is empty,
 and $\Gamma_d$ is $[\coreid\mapsto (\xabs{f}, 41, 41, 0)]$,
 stating that \coreid\ is controlled by \abs{f},
 that \abs{f} will take exactly 41 units of time to complete,
 and that once this future completed, \coreid\ must be called right away.
Similarly to the previous example, we suppose that the time analysis for the await statement is precise:
$$\TA(c,\xabs{await duration(40)})=[40, 40]$$
Hence, the set $C$ is empty.
Following the definition of the different contexts in the rule,
 we have that $\Gamma_{l,1}$ and  $\Gamma_{l,2}$ are empty,
 and $\Gamma_d'$ is $[\coreid\mapsto (\xabs{f}, 1, 1, 0)]$.
We can then conclude the application of the rule:
 $\Gamma_l'$ is thus the empty set, and so the last constraint of the rule's premise is validated.
\end{example}

\begin{example}\label{ex:type:new:5:3}
A last interesting application of the time passing rule is the \abs{await f} statement
 line~\ref{ex:type:ln:awaitf} in Fig~\ref{fig:room}.
As we saw in the previous example, here $\Gamma_l$ is empty,
 and $\Gamma_d$ is $[\coreid\mapsto (\xabs{f}, 1, 1, 0)]$,
 stating that \coreid\ is controlled by \abs{f},
 that \abs{f} will take exactly 1 units of time to complete,
 and that once this future completed, \coreid\ must be called right away.
Similarly to the previous example, we suppose that the time analysis for the await statement is precise:
$$\TA(c,\xabs{await f?})=[1, 1]$$
Hence, the set $C$ is the rule is $\{\coreid\}$.
Following the definition of the different contexts in the rule,
 we have that $\Gamma_{l,1}$ is empty,
 $\Gamma_{l,2}$ is $[\coreid\mapsto 0]$ (the \abs{Mobile} object takes back control of \coreid\ and must call it right away),
 and $\Gamma_d'$ is empty.
We can then conclude the application of the rule:
 $\Gamma_l'$ is thus equal to $[\coreid\mapsto 0]$, that validates the last constraint of the rule's premise.
\end{example}

\paragraph{Typing Right Hand Sides and Expressions.}

The following (axiomatic) rule deals with side effect-free expression which do not have any effect by construction.
The resulting contexts are identical to the initial ones.
\begin{rules}
\axiom{\Gamma_l,\Gamma_d\typep_c \xabs{e}\typed \emptyset, \Gamma_l,\Gamma_d}
\end{rules}

\medskip

Similarly, the following rule deals with the \abs{get} right hand side, whose only effect is to pass time.
Since time passing is handled in our type system with a rule on statements (as previously described),
 no other effect is registered in this rule and the resulting contexts are identical to the initial ones.
\begin{rules}
\axiom{\Gamma_l,\Gamma_d\typep_c \xabs{e.get}\typed \emptyset, \Gamma_l,\Gamma_d}
\end{rules}
\medskip

The following rule checks object creation:
\begin{rules}
\entry{
  M = \{\xabs{m}\sht \xabs{C.m}\in\dom(\getanncontrolled)\}\and
  S=[\xabs{m}\mapsto \getanncontrolled(\xabs{C.m})]_{\xabs{m}\in M}
}{\Gamma_l,\Gamma_d\typep_c \xabs{new C(e$_1$,$\dots$,e$_n$)} \typed S, \Gamma_l,\Gamma_d}
\end{rules}
This expression does not have any effect on the contexts $\Gamma_l$ and $\Gamma_d$, but might add new \coreid s corresponding to the newly created object.
These unnamed \coreid s are stored and returned in the mapping $S$.

\medskip

The last rule of our type system deals with method call.
This rule is responsible of two main features in our type system:
 if the method call corresponds to a \coreid, we must reset the time counter in the local context $\Gamma_l$;
 and we must transfer the \coreid s delegated to this method call from $\Gamma_l$ to $\Gamma_d$.
\begin{rules}
\entry{
  \forall 1\leq i\leq n,\,\mathit{type}(\xabs{e}_i)=\xabs{T}_i \and 
  \getanncontrolling(\xabs{T$_1$.m})=[\xabs{x}_i.\xabs{m}_j\mapsto[t_j, t_j']]_{i\in I,j\in J_i}\\
  \forall i\in I,\forall j\in J_i,\, \Gamma_l[\xabs{e}_i.\xabs{m}_j] \geq t_j\\
  \Gamma_l'=\Gamma_l\setminus\{\xabs{e}_i.\xabs{m}_j \mid i\in I,j\in J_i\} \and
  \Gamma_l''=\left\{\begin{array}{ll}
    \Gamma_l'[\xabs{e}_1.\xabs{m}\mapsto\getanncontrolled(\xabs{T$_1$.m})] & \text{if $\xabs{e}_1.\xabs{m}\in\dom(\Gamma_l)$}\\
    \Gamma_l' & \text{else}
  \end{array}\right.\\
  \tacontext(c,\xabs{e$_1$!m(e$_2$,$\dots$,e$_n$)})=c' \and
  \TA(c',\xabs{T$_1$.m})=[t_{\min},t_{\max}] \and \mathit{fid}\ \text{\rm fresh}
}{\Gamma_l,\Gamma_d\typep_c \xabs{e$_1$!m(e$_2$,$\dots$,e$_n$)} \typed
  \emptyset, \Gamma_l'',\Gamma_d[\xabs{e}_i.\xabs{m}_j\mapsto (\mathit{fid}, t_{\min},t_{\max}, t_j')]_{i\in I,j\in J_i}}
\end{rules}
This rule works as follows.
First, it gets the type of every expression involved in the method call,
 and gets the information related to the {\commentfont time\_control} annotation of the callee.
Then it checks that the method call is correct w.r.t. the control annotation,
 i.e., the times given in $\Gamma_l$ for all the delegated \coreid\ are valid w.r.t. the specification of the method.
Then, it extract from $\Gamma_l$ the \coreid s that are still local in $\Gamma_l'$
 and updates $\Gamma_l'$ into $\Gamma_l''$ if the callee is a locally controlled \coreid.
And finally, it computes using the function $\tacontext$ the execution context of this method call to obtain its minimum and maximum execution time
 to generate all the relevant information in $\Gamma_d$ for the newly delegated \coreid s.

\begin{example}\label{ex:type:new:6:1}
Consider for instance the call to \abs{t.localCtrl} line~\ref{ex:type:ln:call}
 in the method \abs{timer} of the \abs{Controller} class in the upper part of Fig~\ref{fig:room}.
Let define $\coreid=\xabs{t.localCtrl}$:
 we previously discussed that before the call, $\Gamma_l$ is $[\coreid\mapsto 0]$ and $\Gamma_d$ is empty.
Then, following the first line of the premise of the rule,
 we have that $n=1$, $\xabs{T}_1=\xabs{Tank}$ and $I=\emptyset$ (since the method \abs{Tank.localCtrl} does not control anything).
The second line of the premise only checks the validity of the control delegation, which is empty in our case.
Then, following the third line of the premise,
 we have that $\Gamma_l'=\Gamma_l$ and $\Gamma_l''= [\coreid\mapsto 1]$ since \coreid\ is the method being called.
And finally, we can consider that the time analysis correctly identifies that \coreid\ does not take any time to execute
 (i.e., $\TA(c',\xabs{Tank.localCtrl})=[0,0]$);
 but since \coreid\ does not control anything, the context for delegated control stays empty after the method call.
\end{example}

\begin{example}\label{ex:type:new:6:2}
Another interesting application of the method call rule is the \abs{c.timer(t, 40)} call
 line~\ref{ex:type:ln:call40} in the upper part of Fig~\ref{fig:room}.
Let define $\coreid=\xabs{t.localCtrl}$:
 we previously discussed that before the call, $\Gamma_l$ is $[\coreid\mapsto 1]$ and $\Gamma_d$ is empty.
Then, following the first line of the premise of the rule,
 we have that $n=1$, $\xabs{T}_1=\xabs{Controller}$ and $I=\{2\}$: \abs{Controller.timer} takes control of \coreid.
The second line of the premise checks the validity of the control delegation,
 i.e., that $\Gamma_l[\coreid]$ (the maximum time allowed before calling \coreid)
  is longer than or equal to the time \abs{Controller.timer} takes to call it:
 since both numbers are 1, the check passes.
Then, following the third line of the premise,
 we have that $\Gamma_l'=\emptyset$ since \coreid\ is delegated, and so $\Gamma_l''=\emptyset$.
And finally, we can consider like in Ex.~\ref{ex:type:new:5:2} that the time analysis correctly identifies that this call to \abs{Controller.timer}
 takes exactly 41 units of time to execute (i.e., $\TA(c',\xabs{Controller.timer})=[41,41]$),
 which means that context for delegated control after the method call has the form $[\coreid\mapsto (\xabs{f}, 41, 41, 0)]$.
\end{example}

\begin{example}\label{ex:type:new:6:3}
An last interesting application of the method call rule is the \abs{c.timer(t, -1)} call
 line~\ref{ex:type:ln:newc} in the upper part of Fig~\ref{fig:room}.
Let define $\coreid=\xabs{t.localCtrl}$:
 we previously discussed that before the call, $\Gamma_l$ is $[\coreid\mapsto 0]$ and $\Gamma_d$ is empty.
Then, following the first line of the premise of the rule,
 we have that $n=1$, $\xabs{T}_1=\xabs{Controller}$ and $I=\{2\}$: \abs{Controller.timer} takes control of \coreid.
The second line of the premise checks the validity of the control delegation,
 i.e., that $\Gamma_l[\coreid]$ (the maximum time allowed before calling \coreid)
  is longer than or equal to the time \abs{Controller.timer} takes to call it:
 this check fails since $\Gamma_l[\coreid]$ is 0.
Hence, this rule correctly identifies the subtle timing error in  Fig~\ref{fig:room}.

\medskip

If instead we use the fixed version of the \abs{timer} method,
 before the call $\Gamma_l$ becomes $[\coreid\mapsto 1]$ (and $\Gamma_d$ stays empty).
Here, like in Ex.~\ref{ex:type:new:6:2}, the checks in the second line of the premise would be validated:
 our analysis would correctly validate our proposed fix.

Then, following the third line of the premise,
 we would have that $\Gamma_l'=\emptyset$ since \coreid\ is delegated, and so $\Gamma_l''=\emptyset$.
And finally, we can consider that the time analysis correctly identifies that this call to \abs{Controller.timer}
 will never finish (i.e., $\TA(c',\xabs{Controller.timer})=[\infty,\infty]$),
 which means that context for delegated control after the method call has the form $[\coreid\mapsto (\xabs{f}, \infty, \infty, 0)]$.


\end{example}


\subsection{Proof System}
For the proof system itself, we can now safely assume the specification as a post-region for the class, as long as the program is well-typed. 
\begin{theorem}[Soundness of Timed External Control]\label{thm:extern}
Let $\mathtt{C}$ be a class with a an externally timed $\mathtt{m}$ with frequency $l$.
The externally controlled timed post-region of $\mathtt{cm}$ is defined as follows:
\[\psi^{\mathsf{et}}_\xabs{C.cm} \equiv \mathtt{t} \leq l\]

Let $\mathbf{Prgm}^\Vdash$ be the set of well-typed programs according to Def.~\ref{def:type2}.
The post-region $\psi^{\mathsf{et}}_\xabs{C.cm}$ is $\mathbf{Prgm}^\Vdash$-sound for all methods \abs{m} in \abs{C}.
\end{theorem}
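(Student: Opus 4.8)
**The plan is to establish the key lemma that the type system guarantees: in any well-typed program, every controlled `\coreid` `o.m` with declared frequency $l$ is called at least once in every half-open time window of length $l$.** From this the theorem follows almost directly, since the suspension-subtraces of a method `C.m` in class `C` — the traces between a suspension of the current process and the start of the next non-trivial process — are exactly the traces during which time advances and no process on the object runs; and the post-region $\psi^{\mathsf{et}}_{\xabs{C.cm}} \equiv \mathtt{t} \le l$ is the claim that at most $l$ time units pass during such a subtrace. So the core work is the runtime invariant, and the reduction to $\mathbf{Prgm}^\Vdash$-soundness is a short argument: since `localCtrl`-style controlled methods start with an (implicit) `await diff true` but do nothing continuous, the next scheduled process after a suspension of an externally controlled method is the next call to that very method — hence if calls occur within every $l$-window, no suspension-subtrace exceeds length $l$.

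First I would make precise the runtime invariant tying the typing contexts $\Gamma_l, \Gamma_d$ to configurations. I would define, for a reachable configuration $\mathit{tcn}$ at clock value $t$, what it means for $\Gamma_l, \Gamma_d$ to be \emph{valid} there: for each `\coreid` `o.m'` in $\Gamma_l$ with value $s$, the method `o.m'` must be (re-)called within $s$ further time units along every continuation; for each entry `\coreid` $\mapsto (\mathit{fid}, t_{\min}, t_{\max}, t')$ in $\Gamma_d$, the future $\mathit{fid}$ resolves after between $t_{\min}$ and $t_{\max}$ time units and the control obligation with residual $t'$ becomes local at that point. This requires the correctness hypothesis on the time-analysis oracle $(\CF,\tacontext,\TA)$: namely that $\TA(c,\xabs{s}) = [t_{\min},t_{\max}]$ genuinely bounds the time elapsed while executing `s` in any run matching context $c$, and that $\tacontext$ correctly classifies runtime states into contexts. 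I would then prove by induction on the length of the reduction sequence that the active process's control state is described by some valid $(\Gamma_l,\Gamma_d)$ obtained by running the statement-typing rules on the remaining statements — a subject-reduction argument. Each reduction step corresponds to one typing rule: assignment/`new`/method-call update $\Gamma_l$ and $\Gamma_d$ as in the rules; a time-advancing step (`duration`, `await`) decrements the $\Gamma_l$ counters by the elapsed time and the side condition $\Gamma_l'(\coreid) \ge 0$ is exactly what prevents a missed deadline; the infinite-computation rule discharges obligations delegated to non-terminating futures; and the method-declaration rule provides the crucial gluing invariant that when `o.m'` is re-called its body re-establishes a $\Gamma_l$ with the declared residuals, so the induction can be carried across calls. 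For delegation across objects I would also need that when control passes to `c.timer`, the *global* set of control obligations is conserved — the obligation leaves the caller's $\Gamma_l$, enters the caller's $\Gamma_d$, and simultaneously appears (via the method-declaration rule's precondition $\getanncontrolled(\xabs{T}_i.\xabs{m}_j) \ge t_j$) as a freshly satisfied obligation in the callee.

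\textbf{I expect the main obstacle to be the handoff case: showing there are no gaps when the controlling `\coreid` changes.} Concretely, after `Mobile.run` does `await f?`, the old controller has terminated and a new one is created; one must show the residual time $t'$ recorded in the $\Gamma_d$ entry (the last component) plus the time already elapsed still leaves the tank's obligation satisfiable, and that the new controller's annotation (checked by the method-declaration rule when `c.timer` is called) picks up exactly at that residual. This is precisely where Example~\ref{ex:externaltimedhao}'s bug lives, so the proof must be delicate here: the argument is that the method-call rule's check $\Gamma_l[\xabs{e}_i.\xabs{m}_j] \ge t_j$ at the point of re-delegation, combined with the time-advance rule's $\Gamma_l'(\coreid) \ge 0$, chains the windows together without overlap gaps. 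A secondary difficulty is handling the SSA/no-loop restriction of Remark~\ref{rek:no-loop+ssa} honestly: loops are compiled to recursion, so the "induction across calls" must be the workhorse, and I would need the time analysis to be sound on the recursive encoding — which I would simply assume as part of oracle correctness. Once the runtime invariant is in hand, the final step — quoting it to conclude every state of every suspension-subtrace satisfies $\mathtt{t} \le l$, hence $\psi^{\mathsf{et}}_{\xabs{C.cm}}$ is $\mathbf{Prgm}^\Vdash$-sound, and then invoking Theorem~\ref{thm:main} for soundness of the whole proof-obligation scheme — is routine.
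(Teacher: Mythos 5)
Your overall skeleton coincides with the paper's: the same key runtime lemma (in every run of a well-typed program, every controlled \coreid\ is called within its declared frequency; this is the paper's Theorem~\ref{typetheorem}), the same reliance on an assumed-correct time-analysis oracle (formalized in the paper as context-, method- and statement-validity of $\TA$, Def.~\ref{def:ta-validity}), and the same short reduction from that lemma to $\mathbf{Prgm}^\Vdash$-soundness of $\psi^{\mathsf{et}}$ and then Theorem~\ref{thm:main}. The genuine gap is the vehicle you propose for the key lemma. You define validity of $(\Gamma_l,\Gamma_d)$ at a configuration as the forward-looking statement that each obligation in $\Gamma_l$ is discharged within its residual time ``along every continuation'', and then claim to establish this by induction on the length of the reduction sequence. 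That induction does not go through: the property is a bounded-response (liveness-style) statement quantifying over the entire future of the run, so the inductive step at configuration $i$ would again need exactly the global fact you are trying to prove, and no induction over finite prefixes can show that a call \emph{eventually} occurs --- in particular because a run may perform infinitely many discrete steps between two time advances. The paper makes this point explicitly and therefore abandons subject reduction in favour of a coinductive whole-run annotation: it constructs a control flow $\Delta$ assigning to every runtime \coreid\ a \emph{stack} of futures recording the current controller and all pending callers (this stack discipline is what actually resolves your ``handoff'' worry, including return of control to callers arbitrarily far up the call chain), reads off a counter $\Gamma_i(\coreid)$ at every step from the type derivation of the controlling process's remaining statement (Def.~\ref{def:ts-runtime}), and then argues globally: the counters are bounded above by $\getanncontrolled(\xabs{C.m})$, non-negative by the typing side condition of the time-passing rule, decrease by at least the elapsed time (statement-validity of $\TA$), and are increased only by the method-call rule, so an interval longer than the frequency without a call would drive a counter negative. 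Your plan also omits the paper's Lemma~\ref{lem:infty-computation} (time in the run is unbounded once a controlled object exists, obtained from the $t_{\min}=\infty$ side condition of the forgetting rule together with method-validity), which is needed so that the deadline instant is actually reached in the run rather than the run stalling or time converging before it.

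To repair your proposal you would have to split your ``validity'' into (a) a safety invariant that can be carried along the run --- the counters read off the type derivation are non-negative, consistent with the elapsed time, and conserved across delegation and return via the stack discipline --- and (b) a separate global argument deriving the within-$l$ call from (a) plus time divergence; that is precisely the paper's proof. A minor further point: your claim that the next scheduled process after a suspension of an externally controlled method is necessarily the next call to that very method is neither needed nor true in general (other processes may be scheduled first); all that is required is that the suspension-subtrace ends no later than the scheduling of the call to the controlled method, which is how the paper's concluding corollary argument bounds every suspension-subtrace by $l$.
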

There can be multiple externally timed methods \EK{and it is sound to combine their corresponding post-region generators~\cite{Kamburjan21b}.}
Using the theorem, we can see that \abs{Tank} in Fig.~\ref{fig:room} is safe, if the formulas in Ex.~\ref{ex:tank} are valid and the type checker succeeds. The proof obligations are the same as the ones in Ex.~\ref{ex:timedroom}; they can be automatically closed by \texttt{KeYmaera X}~\cite{FultonMQVP15} and are available~\cite{Kamburjan21a}. 

\section{Modeling Cloud-Aware Hybrid Systems}
\label{sec:case}
Equipped with the type-and-effect system, we now investigate a bigger example to show how Hybrid Active Objects
can be used to model cyber-physical systems using a cloud infrastructure.

\paragraph{Scenario.}
We model the cloud infrastructure that is shown in Fig.~\ref{fig:iot}.
A, possibly growing, set of nodes, must be controlled by a central instance. To do so, the control instance can create new controller tasks and assign them to virtual machines (VM). Each controller task controls one node for a certain amount of time. Once it finishes controlling, a new task must be started, possibly on another VM. When a new task is created, one must pick a VM with enough free resources and, if no such VM exists, create a new one. As a VM may also run other tasks, e.g., accessing some other mechanism of the nodes, the set of VMs may grow and shrink, depending on the resource consumption.
\begin{figure}
    \centering
    \includegraphics[scale=0.6]{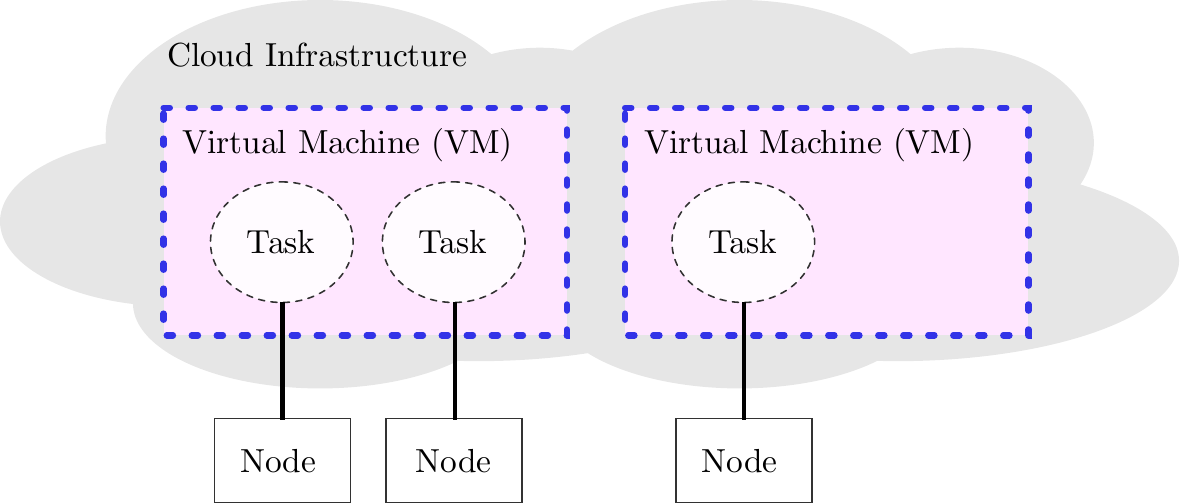}
    \caption{Cloud infrastructure to control nodes}
    \label{fig:iot}
\end{figure}

\paragraph{Resource-Modeling.}
Modeling cloud infrastructure requires the  use of \emph{resource-sensitive} actors and \emph{deployment components}~\cite{DBLP:journals/jlp/JohnsenST15} to model the cloud infrastructure.
We do not allow Hybrid Active Objects to run on deployment components, so the resource model is not relevant for our verification system. However, we consider resource-\emph{aware} HAOs, i.e., HAOs that communicate with (and are controlled by) resource-\emph{using} Timed Active Objects, which are allowed to consume resources. Communication is handled via method calls and Cooperative Contracts~\cite{DBLP:series/lncs/KamburjanDHJ20}.
A detailed introduction to resource-aware modeling with Active Objects is given by Schlatte et al.~\cite{DBLP:conf/coordination/SchlatteJKT21}, we give a short introduction into the core concepts next.

A deployment component (DC) is an abstraction over some \emph{location} that posses some \emph{resources}, that are refilled once per time unit.
Any object may run on at most one DC and each statement may consume some resource. If the DC has not enough resources to perform a resource-consuming statement, than the clock must be advanced by one time unit until the resource is refilled and can be consumed again. The process blocks the object for that time. The following code creates a DC with 3 units of the speed resource (line~\ref{line:new-DC}) and creates an object on it (line~\ref{line:obj-DC}) using the \abs{[DC: dc]} annotation. It then calls the \abs{m} method which consumes 10 resources (modeled by the \abs{[Cost: 10]} annotation. Thus, the method takes (at least) 4 time units to complete. Annotations are specific to resource modeling in \texttt{ABS}. They can be ignored \emph{for our verification}, if it can be shown that the resource model does not influence the time behavior.

\noindent\begin{abscode}
class C() { Unit m(){ [Cost: 10] skip; } }
class D() {
  Unit setup() {
    DC dc = new DeploymentComponent(map[Pair(Speed, 3)]); $\label{line:new-DC}$
    [DC : dc] C c = new C(); $\label{line:obj-DC}$
    Fut<Unit> f = c!m();
    f.get;
  }
}
\end{abscode}

\paragraph{Model.}
We now discuss our model\footnote{Available at \mbox{\url{https://formbar.raillab.de/wp-content/uploads/2021/10/nodecloud.zip}}}. It consists of three classes.
The nodes are modeled by the \abs{Tank} class in Fig.~\ref{fig:room}, that we already discussed.
The tasks are modeled by a \abs{CtrlTask} (non-hybrid) class in Fig.~\ref{fig:ex:ctrl}. The sole method of this class takes a node \abs{n} and a time \abs{until} for which it controls the node. For controlling, it calls the \abs{ctrl} method of the node once per time unit. This action consumed one resource (through the \abs{[Cost: 1]} annotation) -- it must, thus, be ensured that there are always enough resources to perform this action without delay. 

\begin{figure}[tbh]
    \centering
\begin{abscode}
class CtrlTask() {
  /*@ time_control: n.ctrl = [0,0] @*/$\label{line:ctrlTask:ann}$
    Unit ctrl(Node n, Int until){
        Rat start = timeValue(now());$\label{line:ctrlTask:line1}$
        while (timeValue(now()) <= until + start) {$\label{line:ctrlTask:line2}$
            [Cost: 1]Fut<Unit> f = n!ctrl();$\label{line:ctrlTask:line3}$
            duration(1,1);$\label{line:ctrlTask:line4}$
        }
    }
}
\end{abscode}
    \caption{A resource-aware model of a controlling task.}
    \label{fig:ex:ctrl}
\end{figure}

Finally, the managing instance is modeled as a \abs{Manager} (non-hybrid) class, shown in Fig.~\ref{fig:ex:manager1} and ~\ref{fig:ex:manager2}. 
The manager keeps a list of its DCs, corresponding to VMs, and maps each of its DCs to the number of tasks assigned to it. Periodically, here, every 10 time units, it removes all DCs which have no tasks assigned (\abs{cleanup}). It may create a new DC with the capacity for 3 tasks (\abs{createNewDc}) and it is able to return a DC with below 2 tasks if a new task arrives (\abs{getFreeDc}).
This is done in the \abs{manage} method: it takes a uncontrolled node and a time $\xabs{d} > 0$. It selects a fitting DC and creates a task on it. It then increases the number of tasks assigned to this DC and waits until the control task is finished. Once the task is finished, the number of tasks assigned to the chosen DC is reduced and after 1 time unit the procedure is repeated.

\begin{figure}[tbh]
    \centering
    \begin{abscode}[firstnumber=11]
class Manager() {
  Map<DeploymentComponent, Int> freeMap = map[];
  Unit run(){ await duration(10); this!cleanup(); this!run(); }

  DeploymentComponent createNewDc(){
    DC dc = new DeploymentComponent(map[Pair(Speed, 3)]);
    freeMap[dc] = 0; 
    return dc;}
  DeploymentComponent getFreeDc(){
    foreach (dc in keys(freeMap)){ if(freeMap[dc] <= 2) return dc; }
    return this.createNewDc();}
  /*@ time_control: n.ctrl = [0,$\color{commentcolor}\infty$] @*/
  Unit manage(Node n, Int d){/* See next figure */ }
  Unit cleanup(){ /* removes all DC with freeMap[dc] == 0	*/ }}
\end{abscode}
    \caption{A resource-aware model of a cloud infrastructure.}
    \label{fig:ex:manager1}
\end{figure}

\begin{figure}[bth]
    \centering
    \begin{abscode}[firstnumber=25]
 /*@ time_control: n.ctrl = [0,$\color{commentcolor}\infty$] @*/
Unit manage(Node n, Int d){
	DeploymentComponent target = this.getFreeDc();$\label{line:manage:1}$
	[DC : target] CtrlTask ctrlTask = new CtrlTask();$\label{line:manage:2}$
	Fut<Unit> f1 = ctrlTask!ctrl(n, d);$\label{line:manage:3}$
	freeMap[freeMap[target]]++; await f1?;$\label{line:manage:4}$
	freeMap[freeMap[target]]--;$\label{line:manage:5}$
	Fut<Unit> f2 = this!manage(n,d); await f2?; }$\label{line:manage:6}$
\end{abscode}
    \caption{A resource-aware model of a cloud infrastructure.}
    \label{fig:ex:manager2}
\end{figure}

Finally, we need a scenario to set up a number of nodes and connect them to the infrastructure. The following code sets up 10 nodes that start over 10 time units and uses each controlling task for 3 time units:

\noindent\begin{abscode}[firstnumber=35]
{ Manager manager = new Manager();$\label{line:main:1}$
  for(i in 1..10){$\label{line:main:2}$
    Node n = new Node(i); manager!manage(n, 3, i);$\label{line:main:3}$
    await duration(1);$\label{line:main:4}$
  }}  
\end{abscode}

\paragraph{Analysis}
To verify the described model, we need to perform two steps: first, we need to close all generated proof obligations, second, we need to ensure that it is typable. 
The proof obligations are already given in the previous section, so we must merely describe the typing and why the time and alias analysis succeeds. 

To check that the model is typable, we first need to identify a time analysis for it.
For simplicity, we informally describe such a $\TA$.
First, it is easy to see that the resource are correctly used in the model and do not cause time passing.
Indeed, three units of speed are declared in a deployment component,
 and each of them host the computation of at most 3 \abs{CtrlTask.ctrl} processes, each of them costing 1 unit of speed each unit of time.
Then it is easy to see that only the methods \abs{CtrlTask.ctrl}, \abs{Manager.run}, \abs{Manager.manage} and the main block take time.
Since \abs{CtrlTask.ctrl} is only called with its parameter \abs{until} equal to 3, this method always takes 3 units of time to complete
 (each iteration of its \abs{while} loop taking 1 unit of time).
The method \abs{Manager.run} runs for 10 units of time and then stops.
Since \abs{Manager.manage} synchronizes with its recursive call, it runs forever.
Finally, the main block takes 10 units of time and then concludes.

Interestingly, since every methods always has the same behavior w.r.t. time, we do not need to consider execution context for this time analysis.

\medskip

We can now apply the type system to the model, and check if a valid type statement can be derived from the rules given in Section~\ref{subsec:ts}.
Since we considered in Section~\ref{subsec:ts} that all loops have been translated away,
 we will consider here that the three loops in the models implicitly correspond to anonymous methods.
 
First, the class \abs{Node} contains only one method, \abs{ctrl} with $\getanncontrolling(\xabs{Tank.ctrl})=\emptyset$.
Hence the initial local and delegated context for typing this method are both empty.
Moreover, the two \abs{if} statements in the method do not create any controlled object,
 so the method is clearly well typed, with the type derivation of its body being:
 $$\emptyset,\emptyset\typep \xabs{if(level <= 4) drain =  1; if(level >= 9) drain = -1;}\typed \emptyset,\emptyset$$

The three other methods \abs{Manager.run}, \abs{Manager.createNewDc} and \abs{Manager.getFreeDc} have the same feature of \abs{Tank.ctrl}
 (their controlling annotation is empty, and they do not create any controlled object),
 and so they are trivially well-typed, with a type derivation similar to the one of \abs{Tank.ctrl}.

Let now consider the method \abs{CtrlTask.CtrlTask} as shown in Fig.~\ref{fig:ex:ctrl}.
Correspondingly to its annotation, the local environment used to type this method's body is $\Gamma_l=[\xabs{n.ctrl}\mapsto 0]$.
The statement of line~\ref{line:ctrlTask:line1} in Fig.~\ref{fig:ex:ctrl} does not take any time and does not have any effect on the domain of $\Gamma_l$,
 so after that statement (correspondingly to the typing rule for assignment and time passing), the local context is still equal to $\Gamma_l$.
Since the parameter \abs{until} is always equal to 3,
 we can consider for simplicity that the \abs{while} loop Line~\ref{line:ctrlTask:line2} in Fig.~\ref{fig:ex:ctrl} have been flattened into three copies of its body\footnote{As stated in Section~\ref{subsec:ts}, our type system can manage arbitrary loop statements by implicitly and automatically replacing them with dedicated method calls and synchronization. We don't use this approach in our informal explanation of the type system because even though this automatic approach always works (i.e., it is correct and complete), it adds new methods, calls and synchronization that obfuscate our explanations. Note that we do not require the recursion to be bounded.}.
Each copy starts with $\Gamma_l=[\xabs{n.ctrl}\mapsto 0]$ and $\Gamma_d=\emptyset$;
 calls \abs{n.ctrl} right away, resulting in $\Gamma_l=[\xabs{n.ctrl}\mapsto 1]$ and $\Gamma_d=\emptyset$;
 and executes \abs{duration(1,1);}, resulting in $\Gamma_l=[\xabs{n.ctrl}\mapsto 0]$ and $\Gamma_d=\emptyset$;
So, since $0\leq 0$ (the end time of \abs{n.ctrl} in the annotation is smaller or equal to the one in $\Gamma_l$),
 the concluding premises of the {\em method declaration} typing rule are validated, and so the method is well-typed.

Let now consider the method \abs{Manager.manage}.
Correspondingly to its annotation, the local environment used to type this method's body is $\Gamma_l=[\xabs{n.ctrl}\mapsto 0]$.
The statements in Lines~\ref{line:manage:1}--\ref{line:manage:3} in Fig.~\ref{fig:ex:manager2} does not take any time and does not have any effect on the domain of $\Gamma_l$,
 so after that statement (correspondingly to the typing rule for assignment and time passing), the local context is still equal to $\Gamma_l$.
Line~\ref{line:manage:3} in Fig.~\ref{fig:ex:manager2} delegates the control of \abs{n.ctrl} to the method call, which takes exactly 3 units of time,
 as stated in our discussion of $\TA$.
Hence, after Line~\ref{line:manage:3}, the local context $\Gamma_l$ is empty, and the delegated context is $\Gamma_d=[\xabs{n.ctrl}\mapsto (\mathit{fid},3,3,0)]$
 for some future name $\mathit{fid}$.
Line~\ref{line:manage:4} in Fig.~\ref{fig:ex:manager2} contains two statement.
The first one is a simple increment, and has no effect on time or on control:
 after it the local and delegated contexts are unchanged.
The second statement synchronizes with $\mathit{fid}$:
 three units of time pass, and $\mathit{fid}$ finishes, giving the control of \abs{n.ctrl} back to the local context.
Hence, since no time passed between the end of $\mathit{fid}$ and the end of the \abs{await} statement,
 that statement is well typed and results in $\Gamma_l$ being back equal to $[\xabs{n.ctrl}\mapsto 0]$ and $\Gamma_d$ being empty.
Line~\ref{line:manage:5} has not effect on time or control, so $\Gamma_l$ and $\Gamma_d$ are kept unchanged after it.
Finally, Line~\ref{line:manage:6} contains two statements.
The first statement delegates the control of \abs{n.ctrl} to another instance of the \abs{Manager.manage} method.
After this statement, we thus obtain that
 the local context $\Gamma_l$ is empty,
 and the delegated context is $\Gamma_d=[\xabs{n.ctrl}\mapsto (\mathit{fid},\infty,\infty,\infty)]$
 (recall that $\TA$ states that the computation time of \abs{Manager.manage} is infinite).
We can thus apply the {\em infinite computation} typing rule, to state that \abs{n.ctrl} is forever well manage
 and remove it from $\Gamma_d$.
Hence, before the await in Line~\ref{line:manage:6}, we get that both the local and delegated contexts are empty.
We can thus type check the await statement, that waits for an infinite amount of time without causing any control issue,
 and conclude the typing of the method.

The last instruction set we need to type check is the main block.
Similarly to the \abs{CtrlTask.CtrlTask} method, since the loop body is executed exactly 10 times, we suppose for simplicity that
 the loop has been flattened away, its body being copied 10 times in the main.
Typing the main starts with the local and delegated contexts, $\Gamma_l$ and $\Gamma_d$, being empty.
Then Line~\ref{line:main:1} has not effect on time or control, so $\Gamma_l$ and $\Gamma_d$ are kept unchanged after it.
Line~\ref{line:main:3} first creates a new \abs{Node} which has the controlled method \abs{ctrl}.
That control is given to the main, and so after this statement, $\Gamma_l=[\xabs{n.ctrl}\mapsto 1]$ ($\Gamma_d$ is still empty).
The second statement in Line~\ref{line:main:3} calls the method \abs{Manager.manage}, and thus delegates the control of \abs{n.ctrl} to it.
This method call is well typed, since the annotation of \abs{Manager.manage} stipulates that \abs{n.ctrl} will be called right away.
After this method call, we thus obtain that $\Gamma_l$ is empty,
 and $\Gamma_d$ is $[\xabs{n.ctrl}\mapsto (\mathit{fid},\infty,\infty,\infty)]$ for some future name $\mathit{fid}$.
Like during the typing of method \abs{Manager.manage}, we can apply the {\em infinite computation} typing rule
 to state that \abs{n.ctrl} is forever well manage and remove it from $\Gamma_d$.
Hence, after Line~\ref{line:main:3}, we get back an empty local context and and empty delegated context.
Line~\ref{line:main:4} makes 1 unit of time pass,  which has no effect on the main, since it has no registered controlled element.

\medskip

This concludes the typing of this model: every methods in it are well-typed, as well as the main,
 and so, following Theorem~\ref{thm:extern}, we can soundly use the specified post-region ($\mathtt{t} \leq \mathtt{l}$) in the proof obligation.

Note the modularity enforced by our system: a change in the HAO does not require us to rerun the type system, a change in the cloud system does not requires us to reprove the HAO.
\FloatBarrier
\section{Related Work}
\label{sec:related}
\paragraph{Deductive Verification of Hybrid Programming Languages}
We already discussed \ddl, which is a simple algebraic programming language. For distributed systems, \ddl has been extended to $\mathcal{Q}d\mathcal{L}$~\cite{DBLP:journals/lmcs/Platzer12b,QDL}, which is implemented in the KeYmaeraD tool~\cite{DBLP:conf/icfem/RenshawLP11}. $\mathcal{Q}d\mathcal{L}$ introduces concurrency by extending the \ddl-program variables to \emph{indexed} variables, which are manipulated using array-style statement. The concurrency model is essentially shared memory and it does not add structuring elements or special constructs to deal with concurrency.

Hybrid Rebeca~\cite{hybreb} is a language that has both constructs for discrete systems and hybrid automata, i.e., it seperates these two concepts. Its semantics is not based on classical program semantics, but on a translation into a single hybrid automaton. For verification, only model checking has been investigated. Thus, the system is neither modular nor is it able to handle unbounded systems.


Process algebras are minimalistic programming languages that have spawned several formalisms for distributed hybrid systems. None of them has been considered for type systems. The \texttt{CCPS} system~\cite{LanotteM17} is an extension of timed process algebra TPL~\cite{HennessyR95} and CCS~\cite{Milner80} uses an inbuilt notion of sensor and actuators.
The $\varphi$-calculus~\cite{RoundsS03} is an extension of the $\pi$-calculus. It has no physical processes but considers them as a part of the environment. 
The work of Khadim~\cite{Khadim} gives a detailed comparison on the process algebras \texttt{HyPA}~\cite{CuijpersR05}, \texttt{Hybrid $\chi$}~\cite{BosK03}, both extending ACP~\cite{BergstraK85}, the $\varphi$-calculus and another extension of ACP~\cite{BergstraM05}.
The \texttt{HYPE} calculus~\cite{GalpinBH13} is an approach that focuses on the composition of continuous behavior, less so on the interaction through discrete actions.

\paragraph{Compositional Deductive Verification of Hybrid Systems}
For deductive verification of hybrid models, besides \ddl, only Hybrid CSP, another process algebra, has been considered~\cite{DBLP:conf/aplas/LiuLQZZZZ10}. In its basic formulation, neither \ddl nor Hybrid CSP have structuring mechanisms for composition or modularity, and additional systems to provide a proof structure on to have been proposed. All these systems have in common that they structure the proof of a hybrid model and do not use structuring mechanisms on the language layer. Mitsch et al.~\cite{DBLP:journals/sttt/MullerMRSP18} give a methodology for composition based specific patterns used to encode components into \ddl. Bohrer and Platzer~\cite{DBLP:journals/corr/abs-2107-08852} give a proof language for Constructive Differential Game Logic, a variant of \ddl. The HHL prover~\cite{hhl} for Hybrid CSP embeds Hoare triples into Isabelle/HOL and can use its structuring mechanisms, such as lemmata. 
Baar and Staroletov~\cite{Baar} give a system to decompose \ddl proofs by transforming hybrid programs into control-flow graphs and annotating contracts to the edges. 

\paragraph{Behavioral Type Systems}
Bocchi et al.~\cite{DBLP:conf/concur/BocchiYY14} describe timed session types for a minimalistic timed process calculus with channels, based on the $\pi$-calculus. Their work checks protocol adherence and uses clock variables for time. These clocks are specified and kept track of by using linear predicates.

Majumdar et al.~\cite{MajumdarYZ20} use session types to coordinate \emph{robotic programs}. 
Robotic programs do not isolate the dynamics as hybrid objects, instead all physical processes share the same geometric space. A focus of their work is the correct partition of the geometric space and the correctness of protocols between parties in disjoint subspaces.

Avanzini and Dal Lago~\cite{DBLP:journals/pacmpl/AvanziniL17} present a type system for complexity classes of functional programs, which is at its core a time analysis. In a similar direction, there is a long line of work of cost analysis, which intersects with time analysis. For Active Objects, cost analysis has been considered by Flores-Montoya~\cite{DBLP:conf/fm/Flores-Montoya16} and Albert et al.~\cite{DBLP:journals/tocl/AlbertCJPR18}. Albert et al.~\cite{DBLP:conf/cav/AlbertGMMR20} also discuss computing $t_\mathsf{min}$. Time analysis has been used directly for deadline analysis by Laneve et al.~\cite{DBLP:journals/jlp/LaneveLPR19}.


\section{Conclusion}
\label{sec:conclusion}
We presented a verification system for distributed hybrid systems that combines deductive verification to verify object invariants with a type-and-effect system to use the global structure of the overall system.
Our system is highly modular and more expressive than prior approaches: only one proof obligation is generated per method and local changes do not require to reprove the whole system. Global analysis is performed using a lightweight type system. We can express and verify patterns with delegated control, a pattern crucial for modeling cloud-aware hybrid systems. 

This work as a further indication that hybrid programming languages are a useful modeling technique in the volatile environment of modern cyber-physical systems, and that it is possible carry over analyses techniques, such as type-and-effect systems or method-based rely-guarantee reasoning, to a hybrid setting.

Concerning the language model, we observe that an object must stay safe forever and cannot be shut down explicitly. It would be convenient, and make hybrid active objects more suitable for digital twin applications, to have some life cycle management with explicit life phases for starting, running, maintaining and shutting down a HAO. 
\section*{Acknowledgments}
This work was partially supported by the Research Council of Norway via the \texttt{SIRIUS} center (Grant Nr.~237898) and the \texttt{PeTWIN} project (Grant Nr.~294600). We thank Reiner Hähnle and Richard Bubel for extensive and constructive feedback on early drafts of this paper.

\bibliographystyle{abbrv}
\bibliography{ref}
\clearpage
\appendix

\fullpaper{
\section{Runtime Semantics}

We denote the combined local and global store with $\sigma = \rho + \tau$, which is the method composition defined by $\mathbf{dom}(\sigma) = \mathbf{dom}(\rho) \cup \mathbf{dom}(\tau)$, $\forall x \in \mathbf{dom}(\rho).~\sigma(x) = \rho(x)$
and $\forall x \in \mathbf{dom}(\tau).~\sigma(x) = \tau(x)$.

\paragraph{Evaluation of Guards and Expressions.}
To define the transition system underlying the SOS, one must be able to evaluate expressions and guards. 
Additionally to the current store, expressions and guards are evaluated with respect to a given time and continuous dynamics.
This is necessary to compute how much time may advance before a differential guard is evaluated to true. We first introduce the evaluation of expressions.

Given fixed initial values of fields $\rho$ and a local store $\tau$, the solution $f$ of an ODE is unique and $f(0) = \sigma = \rho + \tau$.
Non-physical fields do not evolve.
\begin{definition}[Evaluation of Expressions]
Let $f$ be a mapping from $\mathbb{R}^+$ to stores and $\sigma = f(0)$ the current combined state.
Let $\xabs{f}_p$ be a physical field and $\xabs{f}_d$ a non-physical field.
The semantics of fields $\xabs{f}_p$, $\xabs{f}_d$, unary operators ${\sim}$ and
binary operators
$\oplus$
after $t$ time units is defined as follows:
\begin{align*}
\sem{\xabs{f}_d}_\sigma^{f,t} &= \sigma(\xabs{f}_d) \qquad
\sem{\xabs{f}_p}_\sigma^{f,t} = f(t)(\xabs{f}_p)\\
\sem{{\sim}\xabs{e}}_\sigma^{f,t} &= {\sim}\sem{\xabs{e}}_\sigma^{f,t}\qquad\sem{\xabs{e} \oplus \xabs{e'}}_\sigma^{f,t} = \sem{\xabs{e}}_\sigma^{f,t} \oplus \sem{\xabs{e'}}_\sigma^{f,t}
\end{align*}
\end{definition}

Evaluation of guards is defined in two steps: 
First, the \emph{maximal time elapse} is computed. I.e., the maximal time that may pass without the guard expression evaluating to true
For differential guards \abs{diff e} this is the minimal time until \abs{e} becomes true.
Then, the guard evaluates to true if no time advance is needed.

\begin{definition}[Evaluation of Guards]\label{def:semant-diff-guards}
Let $f$ be a continuous dynamic of object $o$ in state $\sigma$. 
The evaluation of guards \abs{g} is defined as 
\[
\sem{\xabs{g}}_\sigma^{f,0} = \text{\normalfont\text{true}} \iff \mathit{mte}_\sigma^f(\xabs{g}) \leq 0
\]
The maximal time elapse {\normalfont($\mathit{mte}$)} is given in Fig.~\ref{fig:guard}.
\begin{figure}
\scalebox{0.85}{\begin{minipage}{\textwidth}
\begin{align*}
\mathit{mte}_\sigma^f(\xabs{duration(e)}) &= \sem{\xabs{e}}^{f,0}_{\sigma} \quad
\mathit{mte}_{\sigma}^f(\xabs{e?}) = \cased{0 &\text{ if $\sem{\xabs{e}}^{f,0}_\sigma$ is resolved} \\ \infty & \text{ otherwise}}\\
\mathit{mte}_\sigma^f(\xabs{diff e}) &= \argmin_{t\geq 0}\big(\sem{\xabs{e}}_\sigma^{f,t} = \text{\normalfont\text{true}}\big)
\end{align*}
\end{minipage}}
\caption{Computing the Maximal Time Elapse.}
\label{fig:guard}
\end{figure}
\end{definition}

The Semantics of a program run is defined as a trace, extracted from a transition sequence defined in Plotkin-style structural operational semantics~\cite{plotkin} (SOS).
At each point in time, the SOS semantic defines the state and the current global time of the program as a \emph{configuration}.
The transition system is defined by a set of conditional rewrite rules over the configurations.
Finally, the trace of a run is extracted by filling-in the gaps between the points in time where the rewrite rules generate a configuration.

To represent the state of a program, we extend the syntax of statements to \emph{runtime syntax}, which also incorporates a \abs{suspend} statement, which deschedules a process.  
The \HABS semantics are SOS semantics and are using standard SOS notation, except that (1) we use the timed extension of Bj{\o}rk et al.~\cite{BjorkBJST13}, which allows
us to keep track of time in the semantics, and (2) that time advance changes not only the time, but also the object state, because it evolves according to the \abs{physical} block.
We only present the rules relevant for the hybrid extension of \texttt{ABS} here, for the missing rules we refer to the semantics of Timed \texttt{ABS} by Bj{\o}rk et al.~\cite{BjorkBJST13}, whose semantics we conservatively extend~\cite{arxiv}: if no \abs{physical} block is used, then our semantics is the same as the one for timed \texttt{ABS}.

\begin{figure}[tbh]
\begin{align*}
\mathit{tcn} &::= \mathsf{clock}(\xabs{e})~\mathit{cn}\qquad
\mathit{cn} ::= \mathit{cn}~\mathit{cn} \synsep \mathit{fut} \synsep \mathit{msg} \synsep \mathit{ob} \qquad \mathit{fut} ::= \mathsf{fut}(\mathit{fid},\xabs{e})\\
\mathit{msg} &::= \mathsf{msg}(o, \xabs{m}, \many{\xabs{e}}, \mathit{fid})\qquad
\mathit{ob} ::= (o, \rho, {\mathit{ODE}, f}, \mathit{prc}, q)\qquad q ::= \many{\mathit{prc}}\qquad
\\
\mathit{prc} &::= (\tau,\mathit{fid},\xabs{rs})\synsep\bot\qquad
\xabs{rs} ::= \xabs{s} \synsep \xabs{suspend;s}
\end{align*}
\caption{Runtime syntax of \HABS.}
\label{fig:cfg}
\end{figure}

\begin{definition}[Runtime Syntax~\cite{BjorkBJST13}]\label{def:rsyntax}
The runtime syntax of \HABS is given by the grammar in Fig.~\ref{fig:cfg}. 
Let $\mathit{fid}$ range over future names, $o$ over object identities,
and $\rho,\tau,\sigma$ over stores (i.e., maps from fields or variables to values). 
\end{definition}

A timed configuration $\mathit{tcn}$ has a clock $\mathsf{clock}$ with
the current time as its parameter \abs{e} of \abs{Real} type and an object
configuration $\mathit{cn}$.  An object configuration 
consists of messages $\mathit{msg}$, futures $\mathit{fut}$ and
objects $\mathit{ob}$.  A message
$\mathsf{msg}(o, \xabs{m}, \many{\xabs{e}}, \mathit{fid})$ has the callee $o$, the method to be called \abs{m}, passed
parameters $\many{\xabs{e}}$ and the generated future $\mathit{fid}$ as its parameters.  A future
$\mathsf{fut}(\mathit{fid},\xabs{e})$ has as parameters the future name $\mathit{fid}$
with its return value \abs{e}.  An object
$(o, \rho, \mathit{ODE}, f, \mathit{prc}, q)$ has
an identifier $o$, an object store $\rho$, the current dynamic
$f$, an active process $\mathit{prc}$ and a set of inactive
processes $q$ as its parameters.  $\mathit{ODE}$ is taken from the class declaration.  A
process is either terminated ($\bot$) or has the form
$(\tau,\mathit{fid},\xabs{rs})$: the process store $\tau$ with the current state of
the local variables, its future identifier $\mathit{fid}$, and the statement $\xabs{rs}$
left to execute. Composition $\mathit{cn}_1~\mathit{cn}_2$
is commutative and associative.

The transition system has two parts: first, the \emph{discrete transition system}, i.e., the rules that operate on object configurations. Then, the \emph{continuous transition system} to handle time and operate on timed configurations.

\begin{figure*}[b!th]
\noindent\scalebox{0.85}{\begin{minipage}{\textwidth}
\begin{align*}
\rulename{1}\ &\big(o, \rho,{\mathit{ODE}, f}, (\tau,\mathit{fid},\xabs{await g;s}), q\big) ~\rightarrow~ \big(o, \rho,{\mathit{ODE}, f}, (\tau,\mathit{fid},\xabs{suspend;await g;s}),q\big)\\
\rulename{2}\ &\big(o, \rho,{\mathit{ODE}, f}, (\tau,\mathit{fid},\xabs{suspend;s}), q\big) ~\rightarrow~ \big(o, \rho,{\mathit{ODE}, \mathsf{sol}(\mathit{ODE},\rho)}, \bot, q\circ(\tau,\mathit{fid},\xabs{s})\big)\\
\rulename{3}\ &\big(o, \rho,{\mathit{ODE}, f}, \bot, q\circ(\tau,\mathit{fid},\xabs{await g;s})) \rightarrow~ \big(o, \rho,{\mathit{ODE}, f}, (\tau,\mathit{fid},\xabs{s}),q\big) \qquad\text{if $\sem{g}_{\rho\circ\tau}^{f,0}$ = \text{\normalfont\text{true}}}\\
\rulename{4}\ &\big(o, \rho,{\mathit{ODE}, f}, (\tau,\mathit{fid},\xabs{v = e;s}), q\big) \\&\rightarrow~ \big(o, \rho,{\mathit{ODE}, f}, (\tau[\xabs{v} \mapsto \sem{\xabs{e}}^{f,0}_{\rho\circ\tau}],\mathit{fid},\xabs{s}),q\big)\qquad
\text{if \xabs{e} contains no call or \abs{get}}\\
\rulename{5}\ &\big(o, \rho,{\mathit{ODE}, f}, (\tau,\mathit{fid},\xabs{return e;}), q\big) \rightarrow~\big(o, \rho,{\mathit{ODE}, \mathsf{sol}(\mathit{ODE},\rho)}, \bot, q\big)~\mathsf{fut}\big(\mathit{fid},\sem{\xabs{e}}^{f,0}_{\rho\circ\tau}\big)\\
\rulename{6}\ &\big(o, \rho,{\mathit{ODE}, f}, (\tau,\mathit{fid},\xabs{v = e.get;s}), q\big)~\mathsf{fut}\big(\mathit{fid},\xabs{e'}\big) 
\\&\rightarrow~ \big(o, \rho,{\mathit{ODE}, f}, (\tau,\mathit{fid},\xabs{v = e';s}), q\big) \qquad\text{if }\sem{\xabs{e}}^{f,0}_{\rho\circ\tau} = \mathit{fid}\\
\rulename{7}\ &\big(o, \rho,{\mathit{ODE}, f}, (\tau,\mathit{fid},\xabs{v = e!m(e}_1,\dots\xabs{e}_n\xabs{;s}), q\big) 
\\&\rightarrow~
\big(o,\rho,{\mathit{ODE},f},(\tau[\xabs{v}\mapsto \mathit{fid}_2],\mathit{fid},\xabs{s}),q\big) ~\mathsf{msg}\big(\sem{\xabs{e}}^{f,0}_{\rho\circ\tau},(\sem{\xabs{e}_1}^{f,0}_{\rho\circ\tau},\dots,\sem{\xabs{e}_n}^{f,0}_{\rho\circ\tau}),\mathit{fid}_2\big)\qquad\mathit{fid}_2\text{ fresh}
\end{align*}
\end{minipage}}
\caption{Selected rules for the \HABS discrete transition system.}
\label{fig:ntsem}
\end{figure*}

\paragraph{Discrete Transition System}

Fig.~\ref{fig:ntsem} gives the most important rules for the semantics of objects, the omitted rules are given in~\cite{BjorkBJST13}.
The rules match on object configurations and the matching part is rewritten accordingly.
For example, rule \rulename{5} matches on a single object and rewrites it into an object and a future. 

The rule \rulename{1} introduces a \abs{suspend} statement in front of an
\abs{await} statement.  Rule \rulename{2} consumes a \abs{suspend}
statement and moves a process into the queue of its object---at this
point, the ODEs are translated into some dynamics with $\mathsf{sol}$.   
Rule \rulename{3} activates a process
with a following \abs{await} statement, if its guard evaluates to
true. An analogous rule (not shown in Fig.~\ref{fig:ntsem}) activates
a process with any other non-\abs{await} statement.  Rule \rulename{4}
evaluates an assignment to a local variable. The rule for field is
analogous.  
Rule \rulename{5} realizes a termination (with solutions
of the ODEs) and \rulename{6} a future read. Finally, \rulename{7} is
a method call that generates a message.


\paragraph{Continuous Transition System}

For configurations, there are two rules, shown in Fig.~\ref{fig:tsem}.
Rule \rulename{i} realizes a step of an object without advancing
time. Only if \rulename{i} is not applicable, rule \rulename{ii} can
be applied. It computes the global maximal time elapse 
and advances the time in the clock and all objects.  

State change during a time advance is handled functions
$\mathit{adv}$ which are applied to all elements of a configuration
We only give two members of the family: $\mathit{adv}_{\mathit{heap}}$ takes as parameter a store $\sigma$, the dynamics $f$ and a duration $t$, 
$\mathit{adv}_{\mathit{prc}}$ takes a process $\mathit{prc}$, the dynamics $f$ and a duration $t$.
Both advance its first parameter by $t$ time units according to $f$.  
The state evolves according to the current dynamics and the guards of \abs{duration} guards and statements are decreased by $t$ (if the \abs{duration} clause is part of the first statement of an unscheduled process).
For non-hybrid Active Objects $\mathit{adv}_{\mathit{heap}}(\rho,t) = \rho$. For all other cases of \textit{adv}, the function only propagates according to the syntax in definition~\ref{def:rsyntax}.

\begin{figure}[tbh]
\scalebox{0.85}{\begin{minipage}{\columnwidth}
\begin{align*}
\rulename{i}~&\mathsf{clock}(t)~\mathit{cn}~\mathit{cn}' \rightarrow \mathsf{clock}(t)~\mathit{cn}''~\mathit{cn}' \qquad\text{ with } \mathit{cn} \rightarrow \mathit{cn}''\\
\rulename{ii}~&\mathsf{clock}(t)~\mathit{cn} \rightarrow \mathsf{clock}(t+t')~\mathit{adv}(\mathit{cn},t') \qquad\text{ if \rulename{i} is not applicable}\text{ and }\mathit{mte}(\mathit{cn}) = t'\neq \infty
\end{align*}
\end{minipage}}

\scalebox{0.85}{\begin{minipage}{\columnwidth}
\begin{align*}
\mathit{adv}_\mathit{prc}\big((\tau,\mathit{fid},\xabs{s}),f,t\big) &= (\tau,\mathit{fid},\xabs{s}) \text{ if }\xabs{s} \neq (\xabs{await duration(e); s'})\\
\mathit{adv}_\mathit{prc}\big((\tau,\mathit{fid},&\,\xabs{await duration(e);s}),f,t\big) = 
(\tau,\mathit{fid},\xabs{await duration(e+}t\xabs{);s})\\
\mathit{adv}_\mathit{heap}(\rho,f,t)(\xabs{f}) &=
\cased{
\rho(\xabs{f}) & \text{ if \abs{f} is not physical} \\
f(t)(\xabs{f})  & \text{ otherwise }}
\end{align*}
\end{minipage}}
\caption{Timed semantics of \HABS configurations.}
\label{fig:tsem}
\end{figure}

Similarly to \textit{adv}, we need to extend the \textit{mte} function to configurations as follows:

\noindent\scalebox{0.85}{\begin{minipage}{\columnwidth}
\begin{align*}
\mathit{mte}(\mathit{cn}~\mathit{cn}') &= \mathbf{min}(\mathit{mte}(\mathit{cn}),\mathit{mte}(\mathit{cn}'))\qquad\mathit{mte}(\mathit{msg}) = \mathit{mte}(\mathit{fut}) = \infty\\
\mathit{mte}(o, \rho, \mathit{ODE}, f, \mathit{prc}, q) &=\sem{\mathbf{min}_{q' \in q}(\mathit{mte}(q'),\infty)}_\rho\qquad
\mathit{mte}(\tau,\mathit{fid},\xabs{await g;s}) = \sem{\mathit{mte}(\xabs{g})}_\tau \\
\mathit{mte}(\tau,\mathit{fid},\xabs{s}) &= \infty\text{ if }\xabs{s}\neq \xabs{await g;s'}\qquad\qquad
\end{align*}
\end{minipage}} \vspace{2mm}

\noindent Note that \textit{mte} is not applied to
the currently active process, because, when \rulename{1} is not
applicable, it is currently blocking and, thus, cannot advance time.

\begin{example}~\label{ex:app}
Consider the following example, which illustrates the semantics of tank from Fig.~\ref{fig:bball}, starting at time 1:
\begin{align*}
&\mathsf{clock}(1)~\Big(o,\rho_1,\mathit{ODE},f,\bot,\big\{(\emptyset,\mathit{fid}_1,\statement^+_{\xabs{down}}),\mathit{prc}\big\}\Big)\\
\xrightarrow{(ii)}~&\mathsf{clock}(2)~\Big(o,\rho_2,\mathit{ODE},f,\bot,\big\{(\emptyset,\mathit{fid}_1,\statement^+_{\xabs{down}}),\mathit{prc}\big\}\Big)\\
\xrightarrow{(3)}~&\mathsf{clock}(2)~\Big(o,\rho_2,\mathit{ODE},f,(\emptyset,\mathit{fid}_1,\statement_{\xabs{down}}),\big\{\mathit{prc}\big\}\Big)\\
\xrightarrow{(7)}~&\mathsf{clock}(2)~\Big(o,\rho_2,\mathit{ODE},f,(\emptyset,\mathit{fid}_1,\dots),\big\{\mathit{prc}\big\}\Big)\quad\mathsf{msg}(o,\mathit{log},\mathit{fid}_2)
\end{align*}
The store is $\rho_1 = \{\xabs{level} \mapsto 4,\xabs{drain} \mapsto -1, \xabs{log} \mapsto \mathit{l}\}$ and two processes are suspended.
The one for \xabs{up} is denoted $\mathit{prc}$, the one for \xabs{down} has the remaining statement $\statement^+_{\xabs{down}}$, which is the whole method body.

Nothing can execute without advancing time, so time is advanced by 1 time unit (using rule (ii)) until the store changes to 
$\rho_1 = \{\xabs{level} \mapsto 3,\xabs{drain} \mapsto -1,\xabs{log} \mapsto \mathit{l}\}$.

This enables rule (3) to schedule the process for \xabs{down}, where the \abs{await} statement is removed: $\statement_{\xabs{down}}$ is the method body without the leading suspension.
Finally, rule (7) is used to generate a message to call the other object.
\end{example}
\section{Proofs}

\subsection{Proof for Theorem~\ref{thm:extern}}

To prove this Theorem, we first formalize the Time Analysis Definition we introduced in Section~\ref{subsec:ts} and define when such analysis is correct.

\subsubsection{Time Analysis}

Our formalization of the time analysis is done in two parts.
First we precisely define the properties of the expressions in $\CF$:
\begin{definition}\label{def:counting-framework}
A {\em Counting Framework} is a tuple $\CF=(\cfexps,\cfminus,\cfisinfty,\cfispositive)$ where:
 $\cfexps$ is the set of all counting expressions;
 $\cfminus:(\cfexps\times\cfexps)\rightarrow\cfexps$ is the minus operation;
 $\cfisinfty\subseteq\cfexps$ is a predicate over counting expressions stating which ones are known to correspond to infinity;
 and $\cfispositive\subseteq\cfexps$ is a predicate over counting expressions stating which ones are known to be positive or null.

We say that $\CF=(\cfexps,\cfminus,\cfisinfty,\cfispositive)$ is {\em valid}
 iff $\mathbb{Q}\subseteq\cfexps$ and the operations $\cfminus$, $\cfisinfty$ and $\cfispositive$ behave as expected with rational as parameters.
\end{definition}
It is important to note that every expression manipulation done in our typing rules in Section~\ref{subsec:ts}
 can be expressed with the reduced set of operations given in Definition~\ref{def:counting-framework}.

\medskip

Using the notations
 $\allmethodnames$ for all the possible qualified method names \abs{C.m},
 $\allstatements$ for all the possible statements in an ABS AST,
 and $\allrhs$ for all the possible \abs{rhs} in an ABS AST,
 we can now formally define a time analysis:
\begin{definition}
A {\em Time Analysis} (TA) is a tuple $\TA=(\CF,\tapreds,\tacontext,\tamap)$ where:
 $\CF=(\cfexps,\cfminus,\cfisinfty,\cfispositive)$ is a valid counting framework;
 $\tapreds$ is a set of {\em execution contexts};
 $\tacontext=\tacontext_1\cup\tacontext_2$ with 
   $\tacontext_1:\allmethodnames\mapsto2^\tapreds$ maps all methods in the program to a set of execution contexts
   and $\tacontext_1:(\tapreds\times\allrhs)\mapsto\tapreds$ maps all method calls in the program, given a caller execution context, to the callee execution context;
 and $\tamap:((\tapreds\times\allmethodnames)\cup(\tapreds\times\allstatements))\mapsto(\cfexps\times\cfexps)$ is a partial function giving
  for pairs of execution contexts and qualified method names, and for pairs of execution contexts and statements, their minimum and maximum execution time.
\end{definition}
Note that in our typing rules, we identified $\CF$ with $\cfexps$ and $\TA$ with $\tamap$.

\medskip

In the previous definition, nothing ensures that the information stated in the time analysis $\TA$ does correspond to the actual execution of a program.
This notion of {\em validity w.r.t. a program \abs{Prgm}} is described in the following.
More precisely, for a $\TA$ to be valid w.r.t. a program, it must be valid for all its possible executions, i.e., for all its runs.
In the following, we introduce some notations for the different manipulation we do on runs to define the validity of a time analysis.
\begin{notation}
Suppose given a (possibly infinite) run $R=\mathit{tcn}_0\rightarrow\mathit{tcn}_1\rightarrow\dots$.
For $i\in\mathbb{N}$, we first write $R_{=i}=\mathit{tcn}_i$ and $R_{\geq i}=\mathit{tcn}_i\rightarrow\dots$.
We write $\mathit{cn}_{|t}\in R$ iff there exist $i$ and $\mathit{cn}_i$ such that $R_{=i}=\mathsf{clock}(t)\  \mathit{cn}\ \mathit{cn}_i$.
We also write $\mathit{prc}_{|t}\in R$ iff there exists $o$, $\rho$, ${\mathit{ODE}, f}$ and $q$ such that $(o, \rho, {\mathit{ODE}, f}, \mathit{prc}, q)_{|t}\in R$.
Finally, we write $\mathit{cn}_{|t}\rightarrow^*\mathit{cn}_{|t'}'\in R$ (resp. $\mathit{prc}_{|t}\rightarrow^*\mathit{prc}_{|t'}'\in R$)
 iff there exists $i\leq j$ such that $\mathit{cn}_{|t}\in R_{\geq i}$ and $\mathit{cn}_{|t'}'\in R_{\geq j}$
 (resp. $\mathit{prc}_{|t}\in R_{\geq i}$ and $\mathit{prc}_{|t'}'\in R_{\geq j}$).
%
%
%
\end{notation}
The following definition states when a time analysis $\TA$ is valid w.r.t. a program.
This definition is structured in four parts:
 first we introduce {\em timed instances}, that link $\TA$ to the runtime states of the program;
 then we define {\em context-validity}, that ensures that the $\tacontext$ function is well constructed;
 then we define {\em method-validity}, that ensures that the time information for methods is correct;
 and finally we define {\em statement-validity}, that ensures that the time information for statements is correct.
\begin{definition}\label{def:ta-validity}
Given a type analysis $\TA=(\CF,\tapreds,\tacontext,\tamap)$,
 a {\em timed instance} $\TI$ is a partial function from future names to execution contexts $c$ in $\tapreds$.

We say that $\TA$ is {\em context-valid} for a run $R$ and a timed instance $\TI$ iff
 for all time $t$ and all
$$\begin{array}{c}
\left(\begin{array}{c}
  \big(o, \rho,{\mathit{ODE}, f}, (\tau,\mathit{fid},\xabs{T v = e!m(e$_1$,$\dots$,e$_n$);s}), q\big)_{|t} \\
  \rightarrow^*\\
  \left(\big(o,\rho,{\mathit{ODE},f},(\tau[\xabs{v}\mapsto \mathit{fid}_2],\mathit{fid},\xabs{s}),q\big)
  ~\mathsf{msg}\big(\xabs{e}',\xabs{e}_1',\dots,\xabs{e}_n'),\mathit{fid}_2\big)\right)_{|t}
\end{array}\right)\in R
\end{array}$$
then $\mathit{fid}\in\dom(\TI)$, $\mathit{fid}_2\in\dom(\TI)$, and $\tacontext(\TI(\mathit{fid}), \xabs{v = e!m(e$_1$,$\dots$,e$_n$;s)}=\TI(\mathit{fid}_2)$.
 
We say that $\TA$ is {\em method-valid} for a run $R$ and a timed instance $\TI$ iff
 for all $(\mathsf{msg}(o, \xabs{m}, \many{\xabs{e}}, \mathit{fid}))_{|t}\in R$, with $\mathit{o}=\xabs{C}$, we have that $\mathit{fid}\in\dom(\TI)$
 and, writing $\tamap(\TI(\mathit{fid}), \xabs{C.m})=[t_{\min}, t_{\max}]$, both:
\begin{enumerate}
\item there exists $t''\geq t_{\max}$ such that $(\mathsf{fut}(\mathit{fid},\xabs{e}))_{|t+t''}\in R$
\item for all $t'\leq t_{\min}$, there exists $t''\geq t'$ such that $(\tau,\mathit{fid},\xabs{rs})_{|t+t''}\in R$
\end{enumerate}

We say that $\TA$ is {\em statement-valid} for a run $R$ and a timed instance $\TI$ iff
 for all $(\tau,\mathit{fid},\xabs{s;rs})_{t}\rightarrow^*(\tau',\mathit{fid},\xabs{rs})_{t'}\in R$, we have that $\mathit{fid}\in\dom(\TI)$
 and, writing $\tamap(\TI(\mathit{fid}), \xabs{s})=[t_{\min}, t_{\max}]$, we also have $t_{\min}\leq t'-t\leq t_{\max}$.

Finally, we say that $\TA$ is {\em valid} for a program \abs{Prgm} iff for all runs $R$ of this program, there exists a timed instance $\TI$ such that
 $\TA$ is context-, method- and statement-valid for $R$ and $\TI$.
\end{definition}

\subsubsection{Run Annotations}

To be able to prove the correction of our type system, we must somehow apply it on every possible run of a program,
 and show that it implies that not to much time passes between two calls of a controlled method.
The difficulty here is that runs can contain an infinite number of steps before time advances,
 and so a classic inductive reasoning (like subject reduction~\cite{subject-reduction}) cannot be applied.

We solve this issue using a co-inductive~\cite{coinduction} approach to apply the type system onto a run.
Our application of the type system is done in three steps.
The first step is implemented in the following definition, and is the one dealing with co-induction.
It introduce a {\em control flow} $\Delta$ that extracts from each state in a run $R$ which future controls which \coreid.
More precisely, since a \coreid\ at runtime is a pair $o.\xabs{m}$, it also stores the variable \abs{x} used in the type derivation,
 to reconstruct the corresponding \coreid \abs{x.m} used at static time.
And to manage the fact that when a method ends, the control of its parameter is given back to the caller,
 $\Delta$ not only stores for every state of $R$ the current controller, but also the callers, in a stack structure.

The correct construction of $\Delta$ is ensured by checking that each statement that changes control is correctly manipulated by it.
Note that since $\Delta$ manipulates stacks, we use the standard stack operations in the definition:
 $\mathit{push}(s,e)$ returns the stack $s$ extended with the element $e$;
 $\mathit{last}(s)$ returns the last element of the stack;
 and $\mathit{pop}(s)$ returns the stack, minus its last element
\begin{definition}
Given a run $R=\mathit{tcn}_0\rightarrow\mathit{tcn}_1\rightarrow\dots$, the {\em control flow} $\Delta=\delta_0\rightarrow\delta_1\rightarrow\dots$ of $R$
gives for each state $\mathit{tcn}_i$ of $R$ a corresponding {\em control information} $\delta_i$ that maps \coreid s $o.\xabs{m}$ to stacks of future names,
 such that for all $\mathit{tcn}_i\rightarrow\mathit{tcn}_j\in R$, we have either:
\begin{itemize}
\item $\delta_j=\delta_i[o.\xabs{m}\mapsto[(\mathit{fid}, \xabs{v}]]_{\xabs{m}\in\{\xabs{m}\mid \xabs{C.m}\in\dom(\getanncontrolled)\}}$ if
\begin{multline*}
\mathit{tcn}_i\rightarrow\mathit{tcn}_j=\\
  \big(o', \rho',{\mathit{ODE}', f'}, (\tau,\mathit{fid},\xabs{T v = new C(e$_1$,$\dots$,e$_n$);s}), q'\big) \\
  \rightarrow
  \left(\big(o',\rho',{\mathit{ODE}',f'},(\tau[\xabs{v}\mapsto o],\mathit{fid},\xabs{s}),q'\big)
  ~(\big(o,\rho'',{\mathit{ODE}'',f''}, \bot,\emptyset\big)\right)
\end{multline*}
\item $\delta_j=\delta_i[o_i.\xabs{m}_j\mapsto\mathit{push}(\delta_i(o_i.\xabs{m}_j), (\mathit{fid_2}, \xabs{x}_i))]_{i\in I,j\in J_i}$ if
\begin{multline*}
\mathit{tcn}_i\rightarrow\mathit{tcn}_j=\\
  \big(o, \rho,{\mathit{ODE}, f}, (\tau,\mathit{fid},\xabs{T v = e!m(e$_1$,$\dots$,e$_n$);s}), q\big) \\
  \rightarrow
  \left(\big(o,\rho,{\mathit{ODE},f},(\tau[\xabs{v}\mapsto \mathit{fid}_2],\mathit{fid},\xabs{s}),q\big)
  ~\mathsf{msg}\big(\xabs{e}',\xabs{e}_1',\dots,\xabs{e}_n'),\mathit{fid}_2\big)\right)
\end{multline*}
 with $\mathit{type}(o)=\xabs{C}$ and
  $\getanncontrolling(\xabs{C.m})=[\xabs{x}_i.\xabs{m}_j\mapsto[t_j, t_j']]_{i\in I,j\in J_i}$
\item $\delta_j=\delta_i[o_i.\xabs{m}_j\mapsto\mathit{pop}(\delta_i(o_i.\xabs{m}_j))]_{\mathit{last}(\delta_i[o_i.\xabs{m}_j])=\mathit{fid}}$ if
\begin{multline*}
\mathit{tcn}_i\rightarrow\mathit{tcn}_j=\\
\big(o, \rho,{\mathit{ODE}, f}, (\tau,\mathit{fid},\xabs{return e;}), q\big) \\\rightarrow \big(o, \rho,{\mathit{ODE}, \mathsf{sol}(\mathit{ODE},\rho)}, \bot, q\big)~\mathsf{fut}\big(\mathit{fid},\xabs{e}'\big)
\end{multline*}
\item $\delta_j=\delta_i$ otherwise
\end{itemize}
\end{definition}

The second step in our application is to retrieve information about method calls,
 i.e., to which method corresponds a future.
This is done in the following definition.
\begin{definition}
We say that a map $\futuremethod$ from future names to qualified method names is {\em correct} w.r.t. a program run $R=\mathit{tcn}_0\rightarrow\dots$ iff
 we have $\mathit{fid}\in\dom(\futuremethod)$ and $\futuremethod(\mathit{fid})=\xabs{C.m}$ for all
 $$\begin{array}{c}
\left(\begin{array}{c}
  \big(o, \rho,{\mathit{ODE}, f}, (\tau,\mathit{fid},\xabs{T v = e!m(e$_1$,$\dots$,e$_n$);s}), q\big)_{|t} \\
  \rightarrow^*\\
  \left(\big(o,\rho,{\mathit{ODE},f},(\tau[\xabs{v}\mapsto \mathit{fid}_2],\mathit{fid},\xabs{s}),q\big)
  ~\mathsf{msg}\big(\xabs{e}',\xabs{e}_1',\dots,\xabs{e}_n'),\mathit{fid}_2\big)\right)_{|t}
\end{array}\right)\in R
\end{array}$$
with $\mathit{type}(\xabs{e})=\xabs{C}$.
\end{definition}

Finally, we can combine all the elements we constructed to apply our type system onto a run,
 i.e., giving to each execution step in a run $R$ a mapping $\Gamma$ from \coreid s to how much time can pass until their next call.
\begin{definition}\label{def:ts-runtime}
Suppose given a well-typed program \textsf{Prgm}, a valid time analysis for it $\TA$.
Consider moreover a run $R=\mathit{tcn}_0\rightarrow\dots$ of this program, with corresponding $\TI$, $\Delta$ and correct $\futuremethod$.
Then, for every $\delta_i\in\Delta$ and every $\text{\coreid}\in\dom(\delta_i)$, we define $\Gamma_i(\text{\coreid})=\Gamma_l(\xabs{x})$ where
 $\mathit{last}(\delta_i(\text{\coreid}))=(\mathit{fid},\xabs{x})$,
 $(\tau,\mathit{fid},\xabs{s})\in\mathit{tcn}_i$,
 $\TI(\mathit{fid})=c$, $\futuremethod(\mathit{fid})=\xabs{C.m}$,
 and $\Gamma_l$ is the local context in the type derivation of $\Gamma_l,\Gamma_d\typep_c \xabs{s}\typed \Gamma_l',\Gamma_d'$ while typing \abs{C.m}.
\end{definition}

\subsubsection{Correction of the Type System}

We now have all the elements to prove that all the time-related annotation in a program are validated if that program type checks.
The proof is structured in two parts.
First, we prove that if a controlled object is ever created during a run of that program, then that run must last for an infinite amount of time
 (which implies that the run has an infinite number of steps).
\begin{lemma}\label{lem:infty-computation}
Consider a program \textsf{Prgm}, well-typed for Definition~\ref{def:type2}, and a run $R=\mathit{tcn}_0\rightarrow\dots$ of that program.
Suppose moreover that there exist \abs{C.m} and $i$ such that $\xabs{C.m}\in\dom(\getanncontrolled)$
 and $\mathit{tcn}_i=(\mathit{tcn}'\ (o, \rho, {\mathit{ODE}, f}, \mathit{prc}, q))$ with $\mathit{type}(o)=\xabs{C}$.
Then for all $t\in\mathbb{Q}$, there exists $j$, $\mathit{cn}$ and $t'\geq t$ such that $\mathit{tcn}_j=\mathsf{clock}(t')~\mathit{cn}$.
\end{lemma}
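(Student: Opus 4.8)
The plan is to show that a controlled object can only ever be created if some process in the run is executing a method for which the time analysis reports an infinite minimum execution time; validity of the time analysis then immediately forces configurations at arbitrarily large clock values.

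First I would locate the creation step. Since the initial configuration of \textsf{Prgm} contains no objects, there is a step $i'\leq i$ at which $o$ is created by some process, say the one with future $\mathit{fid}_0$ running a method $\xabs{m}_0$ of some class $\xabs{C}_0$ (or the \xabs{main} block). By the definition of the control flow $\Delta$, immediately after this step the \coreid\ $o.\xabs{m}$ appears in $\delta_{i'}$ with stack $[(\mathit{fid}_0,\xabs{v})]$, where $\xabs{v}$ is the variable holding $o$; by Remark~\ref{rek:no-loop+ssa} this $\xabs{v}$ is freshly declared, hence a local variable that cannot occur in the {\commentfont time\_control} annotation of $\xabs{m}_0$, and in the case of \xabs{main} there is no such annotation at all.

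The core of the argument is to produce a future $\mathit{fid}^\ast$ occurring in $R$, spawned by an ordinary method-call transition and running a method $\xabs{m}^\ast$ of a class $\xabs{C}^\ast$ with $\tamap(\TI(\mathit{fid}^\ast),\xabs{C}^\ast.\xabs{m}^\ast)=[\infty,\infty]$. For this I would trace $o.\xabs{m}$ through $\Delta$ and the corresponding runtime contexts of Def.~\ref{def:ts-runtime}. Because $o.\xabs{m}$ enters the contexts only via the local variable $\xabs{v}$, it cannot remain in the final local context of $\xabs{m}_0$ (the method-declaration rule forbids it, $\getanncontrolling$ mentioning only parameters, and $\getanncontrolling(\xabs{main})=\emptyset$), while $\Gamma_d'=\emptyset$ is also required at the end. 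Inspecting all the statement and right-hand-side rules, the only rule that can permanently remove a \coreid\ from the pair of contexts is the \emph{infinite computation} rule, whose premise requires $t_{\min}=\infty$ for the future to which the \coreid\ has been delegated. Tracing forward from $\delta_{i'}$: whenever the current controller in the stack of $o.\xabs{m}$ runs a method whose $\tamap$ lower bound is finite, the method-validity condition of the valid time analysis guarantees this controller eventually terminates and returns control, so $o.\xabs{m}$ reappears in a caller's local context and the same reasoning applies; since method bodies are finite and no method in the chain can discard $o.\xabs{m}$ except by re-delegating it to a $t_{\min}=\infty$ future, the stack of $o.\xabs{m}$ is never emptied and the chain must eventually reach such a future $\mathit{fid}^\ast$.

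To finish, let $t^\ast$ be the clock value when $\mathit{fid}^\ast$ is spawned, so that a message $\mathsf{msg}(\cdot,\xabs{m}^\ast,\cdot,\mathit{fid}^\ast)$ occurs in $R$ at time $t^\ast$ with $\tamap(\TI(\mathit{fid}^\ast),\xabs{C}^\ast.\xabs{m}^\ast)=[\infty,\infty]$. The method-validity condition (item~2 of Def.~\ref{def:ta-validity}), instantiated with $t_{\min}=\infty$, yields for every $t'\in\mathbb{Q}$ some $t''\geq t'$ with a configuration $(\tau,\mathit{fid}^\ast,\xabs{rs})_{|t^\ast+t''}\in R$. Hence, given any $t\in\mathbb{Q}$, taking $t'=t$ produces a step $j$ with $\mathit{tcn}_j=\mathsf{clock}(t^\ast+t'')\,\mathit{cn}$ and $t^\ast+t''\geq t$, which is exactly the statement. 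The step I expect to be the main obstacle is turning the purely static fact ``the typing derivation of the creating method must apply the infinite-computation rule'' into the runtime fact ``a never-terminating controller for $o.\xabs{m}$ is actually launched in this run'': this is precisely what the co-inductive run-annotation machinery ($\Delta$, the correct $\futuremethod$, and the runtime contexts of Def.~\ref{def:ts-runtime}) is for, together with the observation that a valid time analysis cannot honestly assign a finite $t_{\min}$ to a controller that makes no progress, so the control chain cannot be renewed indefinitely by genuinely terminating controllers without eventually passing control to a $t_{\min}=\infty$ one.
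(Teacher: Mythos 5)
Your proposal is correct and takes essentially the same route as the paper's proof: since the freshly created, locally named \coreid\ can appear neither in the final local nor in the final delegated context of the creating method's typing derivation, it can only be discharged by the infinite-computation rule, i.e., via a delegation whose call has $t_{\min}=\infty$, and method-validity (item~2 of Def.~\ref{def:ta-validity}) instantiated with $t_{\min}=\infty$ then forces configurations at arbitrarily large clock values. The only difference is that the paper argues purely statically that such a call occurs in the creating method itself and applies method-validity to it directly, whereas you additionally trace the runtime delegation chain through $\Delta$ and the runtime contexts — an elaboration of the same idea (and the progress assumption you flag as the main obstacle is one the paper's proof glosses over as well).
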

\begin{proof}
Since $\mathit{tcn}_i=(\mathit{tcn}'\ (o, \rho, {\mathit{ODE}, f}, \mathit{prc}, q))$, there must exist 
$$\begin{array}{c}
\left(\begin{array}{c}
  \big(o', \rho',{\mathit{ODE}', f'}, (\tau,\mathit{fid},\xabs{T v = new C(e$_1$,$\dots$,e$_n$);s}), q'\big)_{|t} \\
  \rightarrow^*\\
  \left(\big(o',\rho',{\mathit{ODE}',f'},(\tau[\xabs{v}\mapsto o],\mathit{fid},\xabs{s}),q'\big)
  ~(\big(o,\rho'',{\mathit{ODE}'',f''}, \bot,\emptyset\big)\right)_{|t}
\end{array}\right)\in R
\end{array}$$
where the future $\mathit{fid}$ corresponds to a typed-checked method $\xabs{C}'\xabs{.m}'$.
Since that method is typed-checked, \abs{v} is added to its local context $\Gamma_l$ when created,
 but nor the local context $\Gamma_l'$ or the delegated context $\Gamma_d'$ resulting of typing $\xabs{C}'\xabs{.m}'$ can contain \abs{v}.
Hence, it is clear by the construction of the typing rules, that for all $c\in\tacontext(\xabs{C$'$.m$'$})$,
 $\xabs{C}'\xabs{.m}'$ contains, after the creation of \abs{v}, a call $\xabs{T w = e!m(e$_1$,$\dots$,e$_n$)})$,
 such that one of the \abs{e$_i$} is equal to \abs{v} and $t_{\min}=\infty$ with $\TA(\tacontext(c,\xabs{e!m(e$_1$,$\dots$,e$_n$)}), \xabs{C.m})=[t_{\min}, t_{\max}]$.

By Definition~\ref{def:ta-validity} (and in particular, the definition of method-validity), this means that 
 for all $t'$, there exists $t''\geq t'$ such that $(\tau,\mathit{fid},\xabs{rs})_{|t+t''}\in R$ where $\mathit{fid}'$ is the future id of the method call
 $\xabs{e!m(e$_1$,$\dots$,e$_n$)})$.
Consequently, for all $t$, there exists $t'\geq t$ and $i$ such that $R_{=i}=\mathsf{clock}(t')\ \mathit{cn}$:
 there is no upper bound for the time in $R$.
\end{proof}

The second part of the proof uses Definition~\ref{def:ts-runtime} and the property stated in the previous Lemma to prove the correction of our type system
\begin{theorem}\label{typetheorem}
Given a well-typed program \textsf{Prgm}, then
 for every run $R=\mathit{tcn}_0\rightarrow\dots$ of that program,
 for every $(o, \rho, {\mathit{ODE}, f}, \mathit{prc}, q)_{|t}\in R$ with $\mathit{type}(o)=\xabs{C}$ and there exists $\xabs{C.m}\in\dom(\getanncontrolled)$,
 there exists $t'\leq t+\getanncontrolled(\xabs{C.m})$, some  $\many{\xabs{e}}$ and $\mathit{fid}$ with $(\mathsf{msg}(o, \xabs{m}, \many{\xabs{e}}, \mathit{fid}))_{|t'}\in R$.
\end{theorem}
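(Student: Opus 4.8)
The plan is to make the type derivation travel alongside the given execution $R$ and to show that the local control counters maintained by the type system are sound over-approximations of the real time remaining until the next required call. Using well-typedness of \textsf{Prgm}, fix a type derivation for every method and for \abs{main}; using validity of the time analysis, fix a timed instance $\TI$ witnessing that the analysis is context-, method- and statement-valid for $R$; and construct the control flow $\Delta = \delta_0 \to \delta_1 \to \cdots$ together with a correct $\futuremethod$ for $R$ by tracking object creations, asynchronous calls and returns along $R$. By Definition~\ref{def:ts-runtime} these data determine the runtime contexts $\Gamma_i$. The control-flow rules only push and pop stack frames, and — by the same argument used in the proof of Lemma~\ref{lem:infty-computation}, namely that the creator of a controlled object can never carry the fresh coreid in its outgoing local context and so must delegate it — no stack is ever emptied; hence once a controlled $\text{\coreid} = o.\xabs{m}$ enters $\dom(\delta_i)$ at the creation step of $o$ it stays in every later $\dom(\delta_j)$ and $\Gamma_j(\text{\coreid})$ is well defined. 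It therefore suffices to establish the following \emph{control invariant} for all $i$, writing $t_i$ for the clock of $\mathit{tcn}_i$: for every $\text{\coreid} = o.\xabs{m} \in \dom(\delta_i)$ with $\mathit{type}(o) = \xabs{C}$ there are $t'$, $\many{\xabs{e}}$ and $\mathit{fid}$ with $t_i \le t' \le t_i + \Gamma_i(\text{\coreid})$ and $(\mathsf{msg}(o, \xabs{m}, \many{\xabs{e}}, \mathit{fid}))_{|t'} \in R$, and moreover $\Gamma_i(\text{\coreid}) \le \getanncontrolled(\xabs{C.m})$. Given this, the Theorem follows by applying the invariant at the state realizing $(o, \rho, {\mathit{ODE}, f}, \mathit{prc}, q)_{|t} \in R$.

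The control invariant I would prove by coinduction on $R$, with a case analysis on the statement $\xabs{s}$ executed by the process holding $\text{\coreid}$ (the one named by the top frame of $\delta_i(\text{\coreid})$) and on the matching typing rule. Control- and time-neutral steps (side-effect-free assignment, \abs{get}, conditionals, entering a branch, sequencing) only advance within the type derivation and leave the clock, $\delta_i$ and $\Gamma_i(\text{\coreid})$ unchanged, so the coinductive hypothesis at $i+1$ transfers verbatim. For a time step (an \abs{await} guard or a \abs{duration}, reduced by the time-advancing configuration rule), statement-validity of $\TA$ gives $t_{i+1} - t_i \le t_{\max}$ where $[t_{\min}, t_{\max}] = \TA(c, \xabs{s})$, while the time-passing typing rule lowers the counter by exactly $t_{\max}$ and its premise forces $\Gamma_{i+1}(\text{\coreid}) \ge 0$; combining with the coinductive hypothesis at $i+1$ yields $t_i \le t_{i+1} \le t'$ and $t' \le t_{i+1} + \Gamma_{i+1}(\text{\coreid}) \le t_i + \Gamma_i(\text{\coreid})$, and $\Gamma_{i+1}(\text{\coreid}) \le \Gamma_i(\text{\coreid}) \le \getanncontrolled(\xabs{C.m})$. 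For a direct asynchronous call with receiver $o$ and method $\xabs{m}$ while $\text{\coreid}$ is locally controlled, the call rule produces the message at clock $t_i$, so $t' = t_i$ works (using $\Gamma_i(\text{\coreid}) \ge 0$), and the counter is reset to $\getanncontrolled(\xabs{C.m})$ for the continuation. The substantive cases are delegation and hand-back: on a call whose callee annotation delegates $\text{\coreid}$ with control pair $[t_j, t_j']$ to a fresh $\mathit{fid}_2$, the method-call rule checks $\Gamma_i(\text{\coreid}) \ge t_j$ and moves $\text{\coreid}$ into $\Gamma_d$ with the $\TA$-bounds and residual $t_j'$, while $\Delta$ pushes $(\mathit{fid}_2, \xabs{x})$ so that $\Gamma$ is henceforth read off the callee's derivation starting at $t_j \le \Gamma_i(\text{\coreid})$; method-validity of $\TA$ brackets when $\mathit{fid}_2$ resolves between $t_{\min}$ and $t_{\max}$, and on that resolution the time-passing rule returns $\text{\coreid}$ to $\Gamma_l$ with a value kept under $\getanncontrolled(\xabs{C.m})$ by the method-declaration checks ($t_j'$ below the callee's outgoing counter, $t_j$ below $\getanncontrolled$). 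Symmetrically, a \abs{return} pops a frame, and the method-declaration premise guarantees that at least $t_j'$ remains on the counter — exactly the residual recorded in the caller's $\Gamma_d$. A call that delegates $\text{\coreid}$ to a non-terminating callee ($t_{\min} = \infty$, removed by the infinite-computation rule) reduces, via method-validity, to the direct-call case applied inside that never-terminating callee.

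The main obstacle is justifying the coinduction, i.e., ruling out an infinite chain of discrete control hand-offs at a fixed clock value that never emits a call message. I would close this with two observations. First, once a process has delegated $\text{\coreid}$ it cannot re-delegate it before the delegate resolves, and each delegation descends strictly in the callee's \emph{finite} type-derivation tree, so no single process performs infinitely many delegation steps without an intervening time step. Second, by Lemma~\ref{lem:infty-computation} the mere presence of the controlled object $o$ forces $R$ to be time-divergent, so the clock advances without bound; by the time-step case each such advance strictly decreases the (finite, $\getanncontrolled$-bounded) counter, so a call to $\text{\coreid}$ must occur before the counter would drop below $0$. A secondary technical point to handle carefully is the interval during which a delegated call is still a message rather than an active process, where $\Gamma_i(\text{\coreid})$ of Definition~\ref{def:ts-runtime} is not defined: I would absorb this interval into $\TA$'s accounting for the callee, so that the callee's $[t_{\min}, t_{\max}]$ already covers message emission through method termination, and state context- and method-validity accordingly. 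Putting the case analysis together with these productivity and well-definedness arguments gives the control invariant for all $i$, and the Theorem follows as in the first paragraph.
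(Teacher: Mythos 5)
Your proposal is correct and in substance coincides with the paper's proof: both instrument the run with the contexts $\Gamma_i$ of Definition~\ref{def:ts-runtime}, use that these counters are non-negative, bounded by $\getanncontrolled(\xabs{C.m})$, and decrease at least as fast as elapsed time unless a call to the \coreid\ occurs, and invoke Lemma~\ref{lem:infty-computation} to push the clock past $t+\getanncontrolled(\xabs{C.m})$. The coinductive ``control invariant'' wrapper you add is dispensable, since its productivity justification (divergent time together with the bounded, decreasing counter forces a call before the counter would drop below $0$) is exactly the contradiction argument the paper applies directly between the indices $i$ and $i'$.
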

\begin{proof}
By definition of $(o, \rho, {\mathit{ODE}, f}, \mathit{prc}, q)_{|t}\in R$, there must exist $i$ and $\mathit{cn}$ such that
 $\mathit{tcn}_i=\mathsf{clock}(t)\ \mathit{cn}\ (o, \rho, {\mathit{ODE}, f}, \mathit{prc}, q)$.
By Lemma~\ref{lem:infty-computation}, there exists $i'$, $t''\geq t+\getanncontrolled(\xabs{C.m})$ and $\mathit{cn}'$ such that $\mathit{tcn}_{i'}=\mathsf{clock}(t'')\ \mathit{cn}'$.
It is clear that $i<i'$.

Let now consider the sequence $S=\Gamma_0\rightarrow\dots$ constructed as in Definition~\ref{def:ts-runtime}.
Note that by construction of $S$ and the {\em time passing} typing rules, for all $\Gamma_k\in S$, the image of $\Gamma_k$ only contains positive numbers,
 and for all $o'.\xabs{m}\in\dom(\Gamma_k)$, we have $\Gamma_k(o'.\xabs{m})\leq \getanncontrolled(\xabs{C$'$.m})$ where $\mathit{type}(o')=\xabs{C}'$.
Moreover, since the time analysis $\TA$ used to construct $S$ is valid w.r.t. \textsf{Prgm},
 by construction of the {\em time passing} typing rules, and the fact that only the {\em method call} typing rule can increase the value of $\Gamma_k(o.\xabs{m})$,
 for every $i\leq j<k$, with $\mathit{tcn}_j=\mathsf{clock}(t_j)\ \mathit{cn}_j$, $\mathit{tcn}_k=\mathsf{clock}(t_k)\ \mathit{cn}_k$ and such that there is no
 $j\leq l\leq k$ such that $(\mathsf{msg}(o, \xabs{m}, \many{\xabs{e}}, \mathit{fid}))\in \mathit{tcn}_l$ for some $\many{\xabs{e}}$ and $\mathit{fid}$,
 we have that $\Gamma_k(o.\xabs{m})\leq \Gamma_j(o.\xabs{m}) - (t_k-t_j)$.
 
Applying this property to $i$ and $i'$, we obtain that $\Gamma_{i'}(o.\xabs{m})\leq \Gamma_i(o.\xabs{m}) - (t_{i'}-t_i)$.
Since $\Gamma_i(o.\xabs{m})\leq\getanncontrolled(\xabs{C.m})$, if there is no call to $o.\xabs{m}$ in $R$ between $\mathit{tcn}_i$ and $\mathit{tcn}_{i'}$,
 then $\Gamma_k(o.\xabs{m})<0$, which is impossible.

We can thus conclude that for every state $\mathit{tcn}_{i'}=\mathsf{clock}(t'')\ \mathit{cn}'$ in $R$ such that $t''\geq t+\getanncontrolled(\xabs{C.m})$,
 there exists $i \leq i''< i'$ such that $(\mathsf{msg}(o, \xabs{m}, \many{\xabs{e}}, \mathit{fid}))\in \mathit{tcn}_{i''}$.
Hence, there exist such a $i''$ such that $\mathit{tcn}_{i''}=\mathsf{clock}(t')\ \mathit{cn}'$ with $t'\leq t+\getanncontrolled(\xabs{C.m})$,
 which gives us the result.
\end{proof}

Theorem~\ref{thm:extern} is now a simple corollary of Theorem~\ref{typetheorem}: as for every moment in the run the time to the next call on a specified method is bounded by its specified frequency $\mathtt{t'}$, the time between two method starts can be at most $\mathtt{t'}$, as well. Thus, each suspension subtrace is bounded in time by $\mathtt{t'}$, which is exactly the condition expressed by the theorem.

}
\end{document}